\documentclass[11pt,a4paper]{article}

\usepackage[utf8]{inputenc}
\usepackage[T1]{fontenc}
\usepackage{lmodern}
\usepackage[english]{babel}
\usepackage[margin=2.7cm]{geometry}
\usepackage{amsmath,amssymb}
\usepackage{amsthm}
\usepackage{booktabs}
\usepackage{multirow}
\usepackage{longtable}
\usepackage{graphicx}
\usepackage{algorithm}
\usepackage{algpseudocode}
\usepackage{tikz}
\usetikzlibrary{arrows.meta,positioning,calc}
\usepackage{url}
\usepackage[hidelinks]{hyperref}
\usepackage{microtype}

\newtheorem{proposition}{Proposition}

\newcommand{\dpr}{D.P.R.\ 361/1957}
\newcommand{\arth}{Art.~83(1)(h)}
\newcommand{\arti}{Art.~83(1)(i)}

\title{One Vote, Several Parliaments:\\
An Empirical Analysis of the Algorithmic Ambiguity\\
of the Italian Electoral Law on the 2022 General Election Data}

\author{Paolo Coppola\\
\small Department of Mathematics, Computer Science and Physics, University of Udine\\
\small \texttt{paolo.coppola@uniud.it}}

\date{July 2026}

\begin{document}

\maketitle

\begin{abstract}
Crafa's algorithmic analysis of the Italian electoral law (\dpr, as amended by Law 165/2017, the so-called \emph{Rosatellum}) showed that the statutory text describing the territorial distribution of proportional seats (\arth) admits at least three different algorithmic interpretations, which may assign seats to different territories and hence elect different people from the same votes. We test that conclusion empirically: we implement the full seat-allocation pipeline of the law (Arts.~77, 83, 83-\emph{bis}, 84 and 85) and run the three interpretations on the complete open data of the Italian general election of 25 September 2022 (Chamber of Deputies). The implementation reproduces the official national apportionment exactly from the raw municipal data, matches by name 389 of the 391 seats it models (99.5\%), and agrees step by step with the official minutes of the National Central Electoral Office; the two residual disagreements fall on two of the four seats that the Chamber's own Committee on Elections placed under formal investigation in July 2025: in one case because the official record itself reports the decisive percentage tallies in two inconsistent ways, in the other because the decisive decimal comparison lies within the uncertainty of the verified vote counts. On these validated data, the sequential interpretation (Algorithm~A) is order-dependent: reversing the processing order of the constituencies replaces 6 deputies with 6 others, and 1{,}000 random orders produce 560 distinct outcomes; a near-exhaustive sweep of 15 million orders closes the set of order-contingent deputies at 140 persons, while 764 of the 1{,}139 candidates of the admitted lists (67\%) are elected under no order at all. In 29\% of the sampled orders A also strands one or two seats it cannot assign by any rule stated in the text, and in a further 9\% it fills the Chamber with a different party composition. The interpretation applied in electoral practice (Algorithm~C) is order-independent, provably so in the absence of ties, but differs from A by 8 deputies. Under the Mattarella-style interpretation (Algorithm~B) the documented order leaves two seats that cannot be assigned by any rule stated in the text, as historically occurred in 1994, 1996 and 2001, and 46\% of sampled orders strand one to three seats. Under the executions documented in practice the statutory compensation preserves the national seat totals of every coalition and list, and a Monte Carlo analysis shows that the named differences between interpretations are robust to input noise well beyond the residual uncertainty of the data: the ambiguity changes \emph{which persons} are elected and \emph{in which territories}, and leaves party strength untouched only where the text's procedure completes. On real data, this confirms the central claim of Crafa's analysis.
\end{abstract}

\section{Introduction}
\label{sec:intro}

Electoral laws are algorithms written in natural language. They take as input the votes cast by citizens and produce as output the composition of a parliament. When the natural-language description of this computation is ambiguous, the same electoral input may lead to different outputs, depending on which of the admissible readings of the text is executed. This is not a hypothetical concern: Crafa~\cite{crafa2024,crafa2023} showed that the text of the Italian electoral law currently in force, the consolidated law for the election of the Chamber of Deputies (\dpr) as amended by Law 165/2017 (``Rosatellum''), admits several distinct algorithmic interpretations of the procedure that distributes proportional seats among the electoral constituencies (\emph{circoscrizioni}), and that these interpretations can produce different territorial seat assignments. Crafa's analysis identifies three algorithms, here called A, B and C (Section~\ref{sec:crafa}), all \emph{algorithmically admissible} readings of the statutory text: a notion we define and delimit in Section~\ref{sec:legal-scope}, deliberately weaker than a claim of legal admissibility. The analysis further shows that the algorithm apparently applied in actual electoral practice (Algorithm~C) is not the most immediate sequential reading of the text, and that the statutory procedure for transferring surplus seats between constituencies is itself underspecified in rare but possible scenarios.

Crafa's work is an analysis of the legal text, illustrated on small abstract examples. It leaves open a natural empirical question: \emph{on the real votes of a real election, how large is the effect of the ambiguity?} Do the different admissible algorithms elect different parliaments, and if so, how many seats move, between which territories, which individual candidates gain or lose a seat in the Chamber, and with what consequences for the party composition?

This paper answers that question for the Italian general election of 25 September 2022 (Chamber of Deputies). Our contributions are the following.

\begin{enumerate}
\item We provide a complete, faithful, open implementation of the seat-allocation pipeline of the \dpr{} as in force for the 2022 election: Art.~77 (constituency-level tallies, including the statutory pro-rata reallocation of candidate-only votes), Art.~83 (national apportionment with thresholds, and territorial distribution in the three interpretations A/B/C, each followed by the statutory compensation procedure), Art.~83-\emph{bis} (distribution to multi-member colleges), and Arts.~84--85 (proclamation of elected candidates, resolution of multiple candidacies, and the fallback chain for exhausted lists). The implementation works directly on the raw open data of the Ministry of the Interior at municipality level (Section~\ref{sec:data}).
\item We validate the implementation against the official outcome at four increasingly demanding levels (Section~\ref{sec:validation}): exact reproduction of the official national apportionment computed from the raw data; agreement of national list totals within $\pm0.06\%$, a deviation attributable to the input file (the published open data are near-definitive, not the certified counts) rather than to the algorithm; a name-by-name comparison with the official proclamations obtained from the open data of the Chamber of Deputies, matching 389 of 391 modelled seats (99.5\%), the two residual disagreements falling on two of the four seats that the Chamber's own Committee on Elections placed under formal investigation in 2025 (seats on which the official record is internally inconsistent or decided within the uncertainty of the verified counts, so that the comparison target itself, not the simulation, is unsettled there); and a step-by-step comparison with the official minutes of the National Central Electoral Office, transcribed from the published scans (OCR followed by manual proof-reading of every value used in the comparison against the page images), in which the intermediate decimal parts, percentage tallies and recovery chains coincide with ours to within the residual noise of the input data. The name-level comparison is strict enough to expose single-sentence misreadings of the statute, and we document how it did so.
\item We quantify on complete real data the practical effect of the algorithmic ambiguity, \emph{down to the identity of the elected deputies} (Sections~\ref{sec:results} and~\ref{sec:names}). Algorithm~A is order-dependent: reversing the processing order of the constituencies replaces 6 deputies with 6 others and moves one more to a different college, and a systematic study over 1{,}000 uniformly random constituency orders (a sample of the $27! \approx 1.1\times10^{28}$ possible orders, whose exhaustive enumeration is out of reach) yields 560 distinct outcomes, with up to 13 deputies entering and up to 14 leaving the Chamber relative to practice. The sweep also shows that the sequential reading is not merely order-sensitive but, under 29\% of the sampled orders, \emph{incomplete}: one or two residual seats find no unit entitled to receive them and cannot be assigned by any rule stated in the text, and in a further 9\% of orders the Chamber is filled but with a different party composition (Section~\ref{sec:results}). Read per person, the 1{,}000 sampled orders identify 458 distinct individuals (339 elected under every sampled order, 119 order-contingent), each listed with their election frequency in Appendix~\ref{app:perm}; a near-exhaustive extension to 15 million orders --- feasible because the order enters only the fast Phase~3 of the pipeline --- closes this set at 469 individuals, of which 140 are order-contingent, so that of the 1{,}139 candidates of the lists admitted to the distribution, 764 (67.1\%) are elected under no order at all. Algorithm~C is order-independent: provably so in the absence of ties (Proposition~\ref{prop:C}; in case of ties the statute prescribes a draw by lot, so order-invariance would then hold up to the outcome of the draws), and empirically on the 2022 data. Since the outcome of A depends on the processing order, its comparison with C fixes the reference order, the one in which the worksheets of the National Central Electoral Office tabulate the 27 constituencies (Section~\ref{sec:results}): under that order, A and C differ by 8 deputies (4 replaced by 4). Algorithm~B fails to assign 2 seats by the rules stated in the text under the documented order; run under the same 1{,}000 sampled orders, it strands one to three seats in 45.7\% of them and reproduces the official party totals in only 15.7\%. We name every affected deputy, and for each change we publish the complete chain of arithmetic (constituency quotient, integer and decimal parts, exclusion events, compensation transfers, college placement, list ranking, multi-candidacy resolution) so that every claim can be re-verified with pencil and paper from the tables in this article, without consulting the code or the datasets.
\item We assess the robustness of the named results against the residual uncertainty of the input data, i.e.\ the possibility that the certified tallies differ from the near-definitive open data we compute from (Section~\ref{sec:names-robust}): under Monte Carlo perturbations of every list tally, at and well beyond the $\pm0.06\%$ level at which the open data deviate from the official tallies, the full set of named differences between interpretations persists in every draw, because each difference is anchored to integer-valued saturation events or to decimal-part races whose margins exceed the attainable perturbation by one to two orders of magnitude. Under the executions documented in practice, the statutory compensation renders national party totals invariant: there, the ambiguity displaces \emph{people and territories}, not party strength. Under Article~67 of the Constitution, which makes every member of Parliament an unbound representative of the Nation, the people are the constitutionally operative part. Under arbitrary processing orders, as Section~\ref{sec:results} shows, even this invariance can fail.
\end{enumerate}

Beyond the specific case, the paper gives concrete empirical form to a conceptual point in computational law made by Crafa's analysis, and by the law-as-code tradition before it (Section~\ref{sec:related}): formalising a statute does not merely implement it: it can reveal that the enacted text fails to specify a unique decision function. The ambiguity documented here is not lexical (no word of \arth{} is obscure, and no single sentence is contested) but \emph{operational}: the text prescribes elementary operations without fixing their schedule, and different admissible schedules of the same words elect different parliaments. Ambiguity of this kind is largely invisible to doctrinal reading, precisely because each clause taken alone is clear; it becomes visible when the text is executed. What this paper adds is a \emph{measure} of that point: executed on the complete votes of a real election, the admissible readings elect measurably different parliaments, down to named individuals, with margins that survive the residual uncertainty of the data.

All code, derived datasets and logs are also available for inspection and re-execution (Section~\ref{sec:repro}).

\section{Related work}
\label{sec:related}

\paragraph{Law as code.} The observation that rendering a statute executable is itself a powerful form of legal analysis goes back at least to the formalisation of the British Nationality Act as a logic program by Sergot et al.~\cite{sergot1986}, which treated legislation as a specification whose translation into code forces latent imprecisions to the surface. The programme is active today: the Catala language of Merigoux et al.~\cite{merigoux2021catala} was designed for the systematic translation of statutory law into executable implementations, on the explicit motivation that legal prose ``leaves room for ambiguities'' that only formalisation exposes, and the same group compiled the French tax code into a verified artefact~\cite{merigoux2021mlang}. On the institutional side, the OECD's \emph{Rules as Code} programme~\cite{oecd2020} advocates publishing an official machine-consumable counterpart of natural-language rules, precisely because the status quo requires every implementer to re-interpret the text independently, the mechanism by which the ambiguity studied here produces divergent outcomes. From legal theory, Diver~\cite{diver2021} examines what is normatively at stake when rules are enforced by code, which executes deterministically and offers none of the interpretive latitude of legal texts; and recent work explores language models as an aid to statutory formalisation, again finding that making a provision executable deterministically surfaces defects that reading alone misses~\cite{yadamsuren2025}. Our study belongs to this tradition but inverts its usual direction: rather than proposing a formal notation into which the law should be re-drafted, we take the enacted text as it stands, execute \emph{all} of its algorithmically admissible readings on the complete data of a real national election, and measure how much the residual freedom matters.

\paragraph{The mathematics of apportionment.} A distinct literature studies seat-apportionment methods as mathematical objects: the classical treatment of Balinski and Young~\cite{balinski2001} and the modern monograph of Pukelsheim~\cite{pukelsheim2017} characterise divisor and quota methods, their paradoxes (Alabama, population, new-state) and the impossibility of satisfying all fairness axioms at once. That literature analyses the properties of \emph{well-defined} allocation functions and the choice among them. The phenomenon studied here is upstream of it and of a different nature: the statutory text fails to determine \emph{which} function is being computed at all. The Rosatellum's method (largest remainders with truncated quotients) is mathematically unremarkable; what is underdetermined is the operational schedule of an exclusion clause across constituencies, a purely textual degree of freedom that the apportionment literature, starting from well-posed definitions, has no occasion to encounter.

\paragraph{Verified vote counting and election auditing.} A third strand applies formal methods to the \emph{counting} of votes: logical analyses of counting schemes~\cite{beckert2013}, vote counting as mathematical proof with machine-checkable certificates~\cite{pattinson2015}, and mechanised verification of STV algorithms~\cite{ghale2018}; complementarily, risk-limiting audits give statistical guarantees that a reported outcome matches the ballots~\cite{stark2008}. The closest single precedent to our findings is the analysis by Conway et al.~\cite{conway2017} of the New South Wales electronic vote count: the enabling legislation was ambiguous on which transfer counts as the ``last'' when several candidates are elected at the same count, the official specification's prose contradicted its own pseudocode, and the certified software implemented the defective reading, demonstrably changing a real council outcome. That episode concerns one clause of one local STV count; the present paper documents the same phenomenon (statutory text underdetermining the counting algorithm, with outcome-changing effect) systematically, across the entire proportional tier of a national general election, and traces every affected seat to a named person. Methodologically, our step-level validation against the OCR-transcribed minutes of the National Central Electoral Office (Section~\ref{sec:validation}) is a form of post-hoc audit of the official computation, in the spirit of this tradition; the four seats that the Chamber's Committee on Elections placed under investigation in 2025 are its institutional counterpart.

\paragraph{The Rosatellum in political science.} Finally, the Italian electoral system of 2017 has an established political-science literature: its mechanics and its effects on strategic coordination and disproportionality~\cite{chiaramonte2018ips}, the reform trajectory that produced it~\cite{massetti2019}, and the party-system consequences of the 2022 election run under it~\cite{chiaramonte2022sesp,chiaramonte2023}. That literature evaluates the system's political \emph{outputs}, taking the seat-allocation procedure as a well-defined function of the votes. Whether the enacted text actually determines that function is the question raised by Crafa~\cite{crafa2024}, and answered empirically, at the level of individual deputies, here.

\medskip
To the best of our knowledge, this paper provides the first empirical quantification of a statutory ambiguity carried out at the level of individually named elected representatives on the complete data of a real national election; we found no prior work, in any of the four strands above, that does so.

\section{The seat-allocation procedure of the Rosatellum}
\label{sec:law}

We summarise the procedure of the \dpr{}~\cite{dpr361} as in force on 25 September 2022, restricted to the Chamber of Deputies. Of the 400 deputies, 147 are elected in single-member districts (\emph{collegi uninominali}) by plurality, 245 are elected proportionally in 49 multi-member colleges (\emph{collegi plurinominali}) grouped into 27 constituencies (\emph{circoscrizioni}), and 8 are elected in the overseas constituency. The Aosta Valley forms a separate single-member constituency. The proportional allocation proceeds top-down through three territorial levels (national, constituency, multi-member college), each level fixing the seat totals that the level below must respect. Throughout the procedure, electoral quotients are \emph{truncated} integer divisions (``no account is taken of the fractional part of the quotient''), and seats left over after integer quotients are assigned by largest remainders (\emph{parti decimali}, decimal parts of the quotients of attribution). Table~\ref{tab:glossary} collects the territorial vocabulary used throughout the paper, and Figure~\ref{fig:pipeline} the pipeline as a whole.

\begin{table}[t]
\centering\small
\caption{The recurring vocabulary of the allocation procedure. Italian terms are given because they appear in the statute and in the official records cited.}
\label{tab:glossary}
\begin{tabular}{p{4.4cm}cp{6.2cm}}
\toprule
Term (Italian) & 2022 & Meaning \\
\midrule
constituency (\emph{circoscrizione}) & 27 & first-level territorial unit of the proportional allocation (Aosta Valley excluded) \\
multi-member college (\emph{collegio plurinominale}) & 49 & subdivision of a constituency; closed lists of 2--4 candidates compete here \\
single-member district (\emph{collegio uninominale}) & 147 & plurality seat; 146 modelled (Aosta Valley excluded) \\
unit & 5 & coalition or single list admitted to the national apportionment \\
tally (\emph{cifra elettorale}) & --- & vote total of a list or coalition at a given territorial level, after the pro-rata reallocation of Art.~77(1)(c) \\
electoral quotient (\emph{quoziente}) & --- & truncated integer ratio of total tallies to seats at a given level \\
decimal part (\emph{parte decimale}) & --- & fractional part of a unit's quotient of attribution; ranks the claims to residual seats \\
surplus / deficit unit (\emph{eccedentaria / deficitaria}) & --- & unit above/below its entitlement after a distribution; repaired by the statutory compensation \\
\bottomrule
\end{tabular}
\end{table}

\begin{figure}[t]
\centering
\begin{tikzpicture}[
  phase/.style={draw, rounded corners, align=left, text width=8.2cm, inner sep=5pt, font=\footnotesize},
  lev/.style={font=\scriptsize\itshape, anchor=west},
  node distance=3mm]
\node[phase] (f1) {\textbf{Phase 1 --- Art.~77.} District winners; pro-rata reallocation of candidate-only votes; list tallies per college and constituency; percentage tallies; rankings of non-elected district candidates.};
\node[phase, below=of f1] (f2) {\textbf{Phase 2 --- Art.~83(1)(a)--(g).} Thresholds and admissions computed from the data; the 245 proportional seats divided among the admitted units, then within coalitions.};
\node[phase, below=of f2, very thick] (f3) {\textbf{Phase 3 --- \arth{} and \arti{} (the ambiguous step).} Each unit's seats distributed over the constituencies; the exclusion clause admits the readings A, B and C; verification and compensation.};
\node[phase, below=of f3] (f4) {\textbf{Phase 4 --- Art.~83-\emph{bis}.} Each list's constituency seats spread over the multi-member colleges; college-level compensation, donor and recipient colleges chosen independently.};
\node[phase, below=of f4] (f5) {\textbf{Phase 5 --- Arts.~84--85.} Proclamation in list order; district winners skipped; multiple candidacies resolved (Art.~85); exhausted lists recovered (Art.~84(2)--(7)).};
\node[lev, xshift=2mm] at (f1.east) {146 districts};
\node[lev, xshift=2mm] at (f2.east) {national};
\node[lev, xshift=2mm] at (f3.east) {27 constituencies};
\node[lev, xshift=2mm] at (f4.east) {49 colleges};
\node[lev, xshift=2mm] at (f5.east) {391 named deputies};
\draw[-{Stealth}] (f1) -- (f2);
\draw[-{Stealth}] (f2) -- (f3);
\draw[-{Stealth}] (f3) -- (f4);
\draw[-{Stealth}] (f4) -- (f5);
\end{tikzpicture}
\caption{The seat-allocation pipeline of the \dpr{} for the Chamber of Deputies (right: the territorial level at which each phase operates). Each phase fixes totals that the next must respect; every difference among the interpretations A, B and C originates in Phase~3 and propagates mechanically through Phases~4--5.}
\label{fig:pipeline}
\end{figure}

\paragraph{Art.~77 (constituency electoral offices).} For each single-member district: the candidate with the most votes is elected (ties are broken in favour of the youngest candidate). Votes cast for a district candidate \emph{without} marking any supporting list (``candidate-only'' votes) are redistributed pro rata among the lists of the candidate's coalition, in proportion to their list votes in that district, using integer arithmetic with largest remainders (Art.~77(1)(c)). The resulting list tallies (\emph{cifre elettorali}) are accumulated at the level of the multi-member college and of the constituency; percentage tallies (Art.~77(1)(e)) and rankings of non-elected district candidates (Art.~77(1)(h)) are also computed, both of which are needed later by Arts.~84--85.

\paragraph{Art.~83(1)(a)--(g) (national office: apportionment).} National list tallies are summed; a list counts towards its coalition's national tally only if it obtained at least 1\% of the national valid votes (or qualifies under the linguistic-minority rule). Admitted to the apportionment are: coalitions with at least 10\% nationally that contain at least one list above 3\%; single lists above 3\%; and linguistic-minority lists presented in a special-statute region with at least 20\% of the regional vote or at least two elected district candidates. The 245 proportional seats are divided among the admitted units by the national electoral quotient (truncated) and largest remainders (letter~(f)); within each coalition, seats are then divided among its above-threshold lists by the same method (letter~(g)).

\paragraph{\arth{} (territorial distribution --- the ambiguous step).} The national office then distributes the seats of each coalition or single list among the constituencies. For each constituency, the number of proportional seats is fixed in advance (seats of the constituency minus its single-member districts); the constituency electoral quotient is the truncated ratio of the sum of the constituency tallies of the admitted units to that number of seats; each unit receives the integer part of its quotient of attribution, and the remaining seats go to the units with the largest decimal parts, \emph{excluding} units that have already reached their national seat total from letter~(f). The law then prescribes a verification: if, summing over all constituencies, some unit has more seats than its national entitlement (\emph{eccedentaria}) and some other has fewer (\emph{deficitaria}), surplus seats are transferred: the surplus unit gives up seats in the constituencies where it obtained them with the \emph{smallest} decimal parts, and, with priority \emph{within the same constituency}, they are given to deficit units with the largest unused decimal parts. As Crafa showed, the highlighted exclusion clause makes the whole step ambiguous: the text does not specify how the per-constituency operations are interleaved, and different admissible interleavings yield different assignments (Section~\ref{sec:crafa}).

\paragraph{\arti{}.} The same distribution-plus-compensation scheme is repeated \emph{inside} each coalition, to distribute each coalition's constituency seats among its lists, with exclusions driven by the national per-list totals of letter~(g).

\paragraph{Art.~83-\emph{bis} (multi-member colleges).} Each constituency office distributes the constituency seats of each list among the multi-member colleges of the constituency. The college quotient is the truncated ratio of the sum of the college tallies of the lists to the college's predetermined seat count. The text says ``of all the lists'', but the official worksheets show that the sum is taken over the lists \emph{admitted to the apportionment} only (Section~\ref{sec:validation}), one more point on which the executed procedure specifies the text. Then: integer parts, largest decimal parts with exclusion of lists that have reached their constituency total, and a compensation procedure that differs pointedly from that of \arth{}: at this level the transferred seat may land in a college different from the one where it was given up, so the realised number of seats of a college may differ from its predetermined number.

\paragraph{Arts.~84--85 (proclamation).} List candidates are proclaimed in list order. A candidate elected in a single-member district is skipped in the proportional lists. A candidate winning in several multi-member colleges is proclaimed in the college where their list obtained the \emph{lowest} percentage tally (Art.~85). If a list exhausts its candidates in a college (\emph{incapienza}), Art.~84(2)--(7) prescribes a fallback chain: other colleges of the same constituency (by decreasing decimal parts), then the ranking of the list's non-elected district candidates in the constituency, then other constituencies, then their district rankings.

\section{Crafa's three algorithmic interpretations}
\label{sec:crafa}

Crafa~\cite{crafa2023,crafa2024} decomposes \arth{} into six elementary steps for each constituency $C_i$: (1) compute the number of seats $s_i$ to distribute in $C_i$; (2) compute the constituency electoral quotient; (3) compute the quotients of attribution of the $n$ admitted units; (4) assign to each unit the integer part of its quotient; (5) assign the remaining seats by decreasing decimal parts; (6) exclude from step (5) the units that have already been assigned their national seat total. The text specifies neither the order of the constituencies nor how steps (4)--(6) are interleaved across constituencies, and the exclusion clause of step (6) makes the result depend on both choices. Three interpretations arise. (In the published version of Crafa's analysis~\cite{crafa2024} they are named Algorithms~I, II and~III; we keep the A/B/C naming of the technical report~\cite{crafa2023}, with A${}={}$I, B${}={}$II, C${}={}$III.)

\begin{description}
\item[Algorithm A (sequential).] The most natural reading of the text: perform steps (1)--(6) completely in the first constituency, then in the second, and so on. The exclusion of step (6) is evaluated \emph{dynamically}: a unit that reaches its national total in an early constituency is excluded from residual seats in all later ones. The outcome therefore depends on the order in which the constituencies are processed.

\item[Algorithm B (Mattarella-style).] First perform steps (1)--(4) in all constituencies (all integer parts everywhere); then sweep the constituencies again, assigning residual seats by decimal parts with the \emph{dynamic} exclusion of step (6). Since the exclusion is applied while residual seats are being assigned, no unit can exceed its national total and no compensation is needed; but the outcome still depends on the constituency order and, crucially, some seats may remain unassigned when the units still entitled to seats have no usable remainders in the remaining constituencies. Law 277/1993 (the ``Mattarella law'') adopted precisely this scheme, fixing the order (by increasing population); the unassigned-seat defect actually materialised in the elections of 1994, 1996 and 2001~\cite{camera2019}.

\item[Algorithm C (operational practice).] First assign all integer parts in all constituencies; then compute the exclusion of step (6) \emph{once}, statically, excluding only the units that have reached their national total \emph{on integer parts alone}; then assign all residual seats in every constituency by decimal parts, ignoring further saturation. Units may end up above their national total; the statutory compensation procedure then repairs the totals. Crafa's analysis of the parliamentary documentation (the 2019 study of the Chamber's research service~\cite{camera2019}, the 2020 report of the Committee on Elections~\cite{giunta2020} and the official Electoral Manual~\cite{manuale2022}) indicates that this, and not A, is the procedure actually applied in the official tallying.
\end{description}

Figure~\ref{fig:abc} shows the three schedules side by side; Algorithms~\ref{alg:A}--\ref{alg:C} give the three interpretations in pseudocode, in the formulation we implemented. All three are followed by the statutory verification-and-compensation step of \arth{} (last sentences of the letter), which we implement identically for all variants (Section~\ref{sec:impl-h}).

\begin{figure}[t]
\centering
\begin{tikzpicture}[font=\scriptsize,
  s/.style={draw, inner sep=3pt, minimum height=5mm, align=center},
  fz/.style={s, fill=black!12},
  rl/.style={font=\small\bfseries, anchor=east},
  nt/.style={font=\scriptsize\itshape, anchor=north west, text width=11.6cm},
  node distance=2mm]
%% row A
\node[rl] (A) at (0,0) {A};
\node[s, right=3mm of A] (a1) {$c_1$: integers $+$ residuals};
\node[s, right=of a1] (a2) {$c_2$: integers $+$ residuals};
\node[right=of a2] (a3) {$\cdots$};
\node[s, right=of a3] (a4) {$c_{27}$};
\draw[-{Stealth}] (a1) -- (a2); \draw[-{Stealth}] (a2) -- (a3); \draw[-{Stealth}] (a3) -- (a4);
\node[nt] at ($(a1.south west)+(0,-0.8mm)$) {exclusion of saturated units re-evaluated seat by seat across the whole sweep: the outcome depends on the order};
%% row B
\node[rl] (B) at (0,-2.05) {B};
\node[s, right=3mm of B] (b1) {all $c$: integer parts};
\node[s, right=of b1] (b2) {$c_1$: residuals};
\node[s, right=of b2] (b3) {$c_2$: residuals};
\node[right=of b3] (b4) {$\cdots$};
\node[s, right=of b4] (b5) {$c_{27}$};
\draw[-{Stealth}] (b1) -- (b2); \draw[-{Stealth}] (b2) -- (b3); \draw[-{Stealth}] (b3) -- (b4); \draw[-{Stealth}] (b4) -- (b5);
\node[nt] at ($(b1.south west)+(0,-0.8mm)$) {same dynamic exclusion, applied to residual seats only: order still matters, and a seat is lost when every unit is unavailable};
%% row C
\node[rl] (C) at (0,-4.1) {C};
\node[s, right=3mm of C] (c1) {all $c$: integer parts};
\node[fz, right=of c1] (c2) {exclusion set $E$ frozen};
\node[s, right=of c2] (c3) {all $c$: residuals};
\node[s, right=of c3] (c4) {compensation};
\draw[-{Stealth}] (c1) -- (c2); \draw[-{Stealth}] (c2) -- (c3); \draw[-{Stealth}] (c3) -- (c4);
\node[nt] at ($(c1.south west)+(0,-0.8mm)$) {no cross-constituency state in the residual step: order immaterial (Proposition~\ref{prop:C}); over-assignments repaired at the end};
\end{tikzpicture}
\caption{Where the three readings of \arth{} differ: the scheduling of the same elementary operations (integer parts, residual seats by decimal parts, exclusion of units that have reached their national entitlement) across the constituencies $c_1,\dots,c_{27}$.}
\label{fig:abc}
\end{figure}

\subsection{Algorithmic admissibility versus legal interpretation}
\label{sec:legal-scope}

A remark on the scope of the claim is in order, since ``interpretation'' has a narrower meaning in this paper than in legal doctrine. We call a procedure an \emph{algorithmically admissible reading} of a statutory provision if executing it performs exactly the operations the text prescribes, in an order consistent with the text, resolving only those choices that the text leaves open. This is a property of the text considered as the description of a computation. It is deliberately weaker than legal admissibility, which is decided by the canons of interpretation and weighs, beyond the literal wording: \emph{systematic} arguments (coherence with the rest of the statute; for instance, the verification-and-compensation step of \arth{} presupposes that surplus assignments can arise, which is natural under C, possible under A, and vacuous under B); \emph{historical} arguments (the continuity of C with the practice developed under the Mattarella law and with the intent of Law 165/2017 as reconstructed in the parliamentary documentation~\cite[p.~70]{camera2019}); \emph{administrative practice} (the operational manuals and the actual conduct of the tallying~\cite[p.~50]{manuale2022}\cite[Table F2, p.~95]{giunta2020}); and \emph{constitutional conformity} (a reading under which seats remain unassigned, as B, or under which the outcome depends on an arbitrary processing order, as A, would sit uneasily with the constitutional framework of the electorate's equal vote and the Chamber's fixed size). A full legal analysis might well conclude that C is the only \emph{legally} admissible interpretation precisely on the strength of these canons. Our point, and Crafa's, is different and narrower: the enacted text alone does not determine the computation, so the selection of C is currently performed by materials external to the statute (manuals, practice, tradition), not by the statute itself. Throughout the paper, phrases such as ``interpretation A'' and ``admissible readings'' are to be understood in the algorithmic sense; we make no claim that a returning office confronted with the text would be legally entitled, all things considered, to execute A or B.

\begin{algorithm}[t]
\caption{Interpretation A (sequential, dynamic exclusion)}
\label{alg:A}
\begin{algorithmic}[1]
\State $\mathit{tot}[u] \gets 0$ for every admitted unit $u$; let $T[u]$ be $u$'s national entitlement (letter (f))
\For{each constituency $c$ \textbf{in the chosen order}}
  \State compute quotient $q_c$, integer parts $b_c[u]$ and decimal parts $d_c[u]$
  \State $\mathit{seats}[c,u] \mathrel{+}= b_c[u]$;\quad $\mathit{tot}[u] \mathrel{+}= b_c[u]$
  \For{each residual seat of $c$, scanning units by decreasing $d_c[u]$}
    \If{$\mathit{tot}[u] < T[u]$} assign the seat to $u$ in $c$; $\mathit{tot}[u] \mathrel{+}= 1$ \EndIf
  \EndFor
\EndFor
\State run the statutory verification and compensation
\end{algorithmic}
\end{algorithm}

\begin{algorithm}[t]
\caption{Interpretation B (Mattarella-style, dynamic exclusion after all integer parts)}
\label{alg:B}
\begin{algorithmic}[1]
\For{each constituency $c$} compute $q_c$, $b_c$, $d_c$; assign integer parts; update $\mathit{tot}$ \EndFor
\For{each constituency $c$ \textbf{in the chosen order}}
  \For{each residual seat of $c$, scanning units by decreasing $d_c[u]$}
    \If{$\mathit{tot}[u] < T[u]$} assign the seat to $u$ in $c$; $\mathit{tot}[u] \mathrel{+}= 1$ \EndIf
  \EndFor
  \State \textit{(a residual seat may remain unassigned if all units are saturated)}
\EndFor
\State run the statutory verification and compensation
\end{algorithmic}
\end{algorithm}

\begin{algorithm}[t]
\caption{Interpretation C (operational manuals, static exclusion)}
\label{alg:C}
\begin{algorithmic}[1]
\For{each constituency $c$} compute $q_c$, $b_c$, $d_c$; assign integer parts; update $\mathit{tot}$ \EndFor
\State $E \gets \{\, u : \mathit{tot}[u] \ge T[u] \,\}$ \Comment{exclusion computed once, on integer parts only}
\For{each constituency $c$}
  \For{each residual seat of $c$, scanning units by decreasing $d_c[u]$}
    \If{$u \notin E$} assign the seat to $u$ in $c$; $\mathit{tot}[u] \mathrel{+}= 1$ \EndIf
  \EndFor
\EndFor
\State run the statutory verification and compensation \Comment{surpluses are expected here}
\end{algorithmic}
\end{algorithm}

\section{Data}
\label{sec:data}

All inputs are public. Table~\ref{tab:data} summarises them.

\begin{table}[t]
\centering\small
\caption{Data sources used by the simulation.}
\label{tab:data}
\begin{tabular}{p{5.2cm}p{9.3cm}}
\toprule
Dataset & Content and provenance \\
\midrule
Municipal-level results \newline ({\footnotesize\path{Camera_Italia_LivComune.txt}}) & Votes for every list and every single-member candidate in every municipality (25\,MB, $\approx$ definitive data). Open data of the Ministry of the Interior (Eligendo platform)~\cite{eligendo}. \\
Candidate files (\texttt{...\_uni.csv}, \texttt{...\_pluri.csv}) & Single-member candidates and closed proportional lists (candidate names, birth dates, list, position on the list). Ministry of the Interior~\cite{eligendo}. \\
Proportional seats per constituency & 245 seats over 27 constituencies (Aosta Valley excluded), per the apportionment based on the 2011 census in force in 2022. \\
Seats per multi-member college & 49 colleges, from the college design of D.Lgs.\ 177/2020~\cite{dlgs177}; reconstructed and validated as described below. \\
Official proclamations & Elected deputies with college and list, extracted via SPARQL from the linked open data of the Chamber of Deputies (\texttt{dati.camera.it})~\cite{daticamera}, mandates starting October 2022: 400 deputies $=$ 147 single-member $+$ 245 proportional $+$ 8 overseas. \\
Statutory text & \dpr{} Arts.\ 67--88 in the 2022 wording, from the official Electoral Manual~\cite{manuale2022}; \arth{} also reproduced in full in the appendix of~\cite{crafa2023}. \\
Official minutes & Minutes of the National Central Electoral Office for the 2022 election (\emph{verbale UECN}, Modello n.~89 E.P.), parts 3, 4 and 8, published as scanned images~\cite{uecn2022}; transcribed by OCR and used for the step-level validation of Section~\ref{sec:validation}. \\
\bottomrule
\end{tabular}
\end{table}

Three properties of the raw data required careful treatment; we document them because they materially affect any attempt to reproduce the official computation.

\paragraph{Candidate-only votes are absent from the list vote counts.} In the municipal file, the field \texttt{VOTILISTA} contains only the votes in which the voter marked the list. The 1{,}027{,}135 votes (3.66\% of the total) cast for a district candidate without marking a list are recoverable only as the difference between the candidate's votes and the sum of their supporting lists' votes, district by district. Ignoring them understates every list tally, non-uniformly, by up to several percent; Art.~77(1)(c) requires them to be redistributed pro rata (Section~\ref{sec:impl}).

\paragraph{Candidate votes are repeated per list-row.} The field \texttt{VOTICANDIDATO} is repeated on every list-row of a coalition candidate and must be de-duplicated before summation; after de-duplication, the reconstruction is exact for all 1{,}303 single-member candidacies.

\paragraph{Name conventions and encoding.} The candidate files store names in a single field, first name first (``\texttt{FIRSTNAME} \texttt{LASTNAME}''), whereas the votes file has separate \texttt{LASTNAME} and \texttt{FIRSTNAME} fields; person records are joined by an order-insensitive key (sorted name tokens plus birth date). The string \emph{S\"udtirol(er)} appears with a corrupted character in both college and list names and must be normalised consistently, on pain of losing the SVP seat in the Trentino-Alto Adige P01 college.

\paragraph{Seats per multi-member college.} The per-college seat counts fixed by D.Lgs.\ 177/2020 are not distributed as a single machine-readable table with the results data; we reconstructed them and checked the reconstruction for consistency in two ways: (i) the counts sum to the proportional entitlement of each constituency, and (ii) they are consistent with the statutory bound on list length (each list presents between 2 and 4 candidates, and no more than the college's seats), which the actual candidate lists saturate; all 49 colleges pass both checks. Neither check is, by itself, a strong independent validation; the reconstruction is however also validated \emph{end-to-end}: an error in any college's seat count would alter the college-level distribution and break the name-by-name agreement with the official proclamations reported in Section~\ref{sec:validation}, which instead matches all constituency totals exactly and 389 of 391 elected deputies individually.

\paragraph{Scope.} The Aosta Valley constituency (one single-member seat, no proportional seats) is absent from the municipal file, and the overseas constituency follows a different electoral system; the simulation therefore models $391 = 146 + 245$ of the 400 seats. Neither exclusion interacts with the proportional pipeline under study.

\section{Implementation}
\label{sec:impl}

The reference implementation (\texttt{rosatellum.py}, Python~3 with \texttt{pandas}) is a direct transcription of the statutory pipeline into five phases, mirroring the articles of the law. It is deliberately written as straight-line, integer-exact code: quotients are truncated integer divisions wherever the law says so, remainders are compared as exact fractions of the quotient, and every discretionary or random step mandated by the law (draws by lot, \emph{sorteggi}) is delegated to a seeded pseudo-random generator and logged. On the 2022 data no draw is ever actually needed, so the seed does not influence any reported result. Every non-trivial decision (threshold admissions, exclusions, compensation moves, multi-candidacy resolutions, exhausted-list fallbacks) is appended to a human-readable audit log, allowing step-by-step comparison with the official records.

\subsection{Phase 1 --- Art.~77}
\label{sec:impl-f1}

For each of the 146 single-member districts the implementation: (i) determines the winner by plurality, breaking ties in favour of the youngest candidate (Art.~77(1)(b); no tie occurs in 2022); (ii) computes each candidate's individual tally and each list's tally; (iii) redistributes candidate-only votes. For a candidate supported by a single list, candidate-only votes accrue to that list directly. For a coalition candidate supported by lists $\ell_1,\dots,\ell_k$ with district list votes $v_1,\dots,v_k$ and $v^{\ast}$ candidate-only votes, list $\ell_j$ receives $\lfloor v_j\,v^{\ast} / \sum_i v_i \rfloor$ votes, and the votes still missing are assigned one each to the lists with the largest remainders $v_j\,v^{\ast} \bmod \sum_i v_i$, integer arithmetic throughout, so that redistributed votes sum exactly to $v^{\ast}$. The phase finally accumulates list tallies per multi-member college and per constituency, percentage tallies per college (needed by Art.~85), and the per-list rankings of non-elected district candidates by decreasing percentage (Art.~77(1)(h), needed by Art.~84(3) and (6)).

\subsection{Phase 2 --- Art.~83(1)(a)--(g)}
\label{sec:impl-f2}

National tallies and all thresholds are \emph{computed} from the data, not assumed. On the 2022 data the computation gives: the lists \emph{Impegno Civico} (0.60\%) and \emph{Noi Moderati} (0.91\%) fall below the 1\% bar and are excluded from their coalitions' tallies, while \emph{+Europa} (above 1\%, below 3\%) counts towards the centre-left coalition tally without being admitted to seats; the SVP--PATT list qualifies under the linguistic-minority rule with 23.1\% of the regional vote and two elected district candidates. Five units are admitted to the apportionment: the centre-right coalition (12{,}048{,}826 votes, 42.88\% of the 28{,}096{,}790 national valid list votes), the centre-left coalition (7{,}170{,}619, 25.52\%), \emph{Movimento 5 Stelle} (4{,}336{,}660, 15.43\%), \emph{Azione--Italia Viva} (2{,}187{,}296, 7.78\%) and SVP--PATT (117{,}010, 0.42\%, via the minority rule). The 245 seats are divided by the truncated national quotient $\lfloor 25{,}860{,}411 / 245 \rfloor = 105{,}552$ and largest remainders (letter~(f)), then within each coalition among its above-threshold lists (letter~(g)). The results are reported in Section~\ref{sec:validation} (Table~\ref{tab:apportionment}). The national valid-vote total that serves as the denominator of the 1\% and 3\% thresholds is taken over the 27 mainland constituencies only; the Aosta Valley and overseas votes (Section~\ref{sec:data}) do not enter it. This treatment is confirmed end-to-end: the exact reproduction of the official apportionment at Level~1 (Section~\ref{sec:validation}) would be impossible had the threshold denominator been wrong, since the admitted set of units and the 245-seat division both depend on it.

\subsection{Phase 3 --- \arth{} and \arti{}, in variants A, B and C}
\label{sec:impl-h}

The territorial distribution is implemented once, as a single routine parameterised by the variant, and instantiated twice: at the level of coalitions/single lists over the 27 constituencies (\arth{}), and, within each coalition, at the level of its lists over the same constituencies (\arti{}). The three variants differ only in where the exclusion clause is evaluated, exactly as in Algorithms~\ref{alg:A}--\ref{alg:C}. Ties on decimal parts are broken by the larger national tally, as the law prescribes, and then by simulated draw (never reached in 2022). The constituency processing order is an explicit parameter, which is what allows the order-dependence experiments of Section~\ref{sec:results}.

All variants are followed by the same implementation of the statutory verification and compensation. While surplus units exist, the surplus unit with the most excess seats gives up one seat, chosen in the constituency where it gained a residual seat with the \emph{smallest} decimal part; with priority, the seat is reassigned \emph{in that same constituency} to the deficit unit with the largest \emph{unused} decimal part (a decimal part already consumed to gain a residual seat is not reusable, and a transferred seat marks the receiving unit's decimal part in that constituency as used). If no deficit unit has an unused decimal part in any constituency where the surplus unit holds seats, the implementation falls back to the last sentence of \arth{}: subtraction and attribution in \emph{different} constituencies (never needed in 2022 at the constituency level). A guard bounds the loop; in practice it terminates in a handful of moves (Section~\ref{sec:names-method}).

\subsection{Phase 4 --- Art.~83-\emph{bis}}

For each constituency, each list's seats are spread over the multi-member colleges using the college quotient (the truncated ratio of the summed college tallies of the \emph{admitted} lists to the college's predetermined seats, per the official practice documented in the minutes, Section~\ref{sec:validation}): integer parts first, then decimal parts with the static exclusion of the operational practice, then the compensation prescribed by the last sentence of Art.~83-\emph{bis}(1). This compensation differs pointedly from that of \arth{}/\arti{}: the surplus list (the one with the most surplus seats; ties broken by the smaller decimal part) gives up the seat \emph{in the college where it gained it with the smallest decimal part}, while the deficit list (the one with the most missing seats; ties broken by the larger unused decimal part) receives it \emph{in the college where it has its own largest unused decimal part}. These two colleges in general do not coincide, with no same-college priority. Consistently with the law, realised college sizes may therefore deviate from the predetermined ones; this actually happens in 2022 (Section~\ref{sec:results}). As discussed in Section~\ref{sec:validation}, implementing this rule faithfully (rather than reusing the same-constituency-priority rule of \arth{}) is essential to reproducing the official proclamations.

\paragraph{Why Art.~83-\emph{bis} is held fixed at the operational practice.} The A/B/C comparison of this paper varies the interpretation of \arth{} and \arti{} only; Art.~83-\emph{bis} is executed identically (with the static-exclusion reading of the operational practice) in every run. This is a deliberate controlled-variable choice, not a claim that Art.~83-\emph{bis} is unambiguous: its text has the same integer-parts/decimal-parts/exclusion structure and plausibly admits an analogous family of readings. Holding it fixed (a)~isolates the effect of the constituency-level ambiguity, so that every college-level difference reported in Section~\ref{sec:names} is the propagation of an upstream difference through one and the same college procedure, and (b)~pins the college level to the reading documented in the official materials (the college-level worksheets reproduced, for the 2018 election, in the 2020 report of the Committee on Elections~\cite{giunta2020}, and the Electoral Manual's description~\cite{manuale2022}), which is also the reading under which our end-to-end run reproduces the official 2022 proclamations name by name (Section~\ref{sec:validation}). Exploring the ambiguity of Art.~83-\emph{bis} itself would multiply the space of variants combinatorially; we leave it as future work and note only that, as a consequence, the difference counts of Section~\ref{sec:results} are \emph{lower bounds} on the freedom the text as a whole leaves open.

\subsection{Phase 5 --- Arts.~84--85}
\label{sec:impl-f5}

Proclamation proceeds in list order, skipping candidates already elected in a single-member district. Multiple candidacies (a candidate heading several college lists, allowed up to five) are resolved by Art.~85: the candidate takes the seat in the college where their list has the lowest percentage tally. Because resolving one multi-candidacy can promote a lower-ranked candidate who is themselves multi-elected, the implementation iterates the resolution to a fixed point, reached in a few iterations on the 2022 data (the official minutes record 19 multi-elected candidates at the final check). When a list exhausts its candidates in a college, the implementation follows the fallback chain of Art.~84, whose orderings are two-tiered: colleges (and, at para.~4, constituencies) with \emph{unused} decimal parts come first, in decreasing order, followed by those whose decimal parts have already given rise to a seat, again in decreasing order, and each decimal part supports at most one recovered seat, so consecutive missing seats move down the ranking rather than piling up in the same college. Concretely: other colleges of the same constituency (para.~2), the constituency ranking of the list's non-elected district candidates (para.~3), other constituencies by the same two-tier ordering of constituency-level decimal parts, one seat per constituency, each assigned within the designated constituency as in para.~2 (para.~4), and the district rankings of the other constituencies (para.~6). Paragraphs~5 and~7 (fallback to other lists of the same coalition) are not implemented because no 2022 case reaches them. One further point of official semantics, taken from the minutes: the recovered seat is assigned to the first candidate of the target college's list \emph{even if that person is already elected elsewhere}; the resulting multiple election is resolved afterwards, per Art.~85, by proclaiming the person in the college with the lower list percentage and letting the next candidate take over the vacated seat. In the 2022 simulation the chain fires four times, for six seats in total, and each resolution matches the official proclamations and the official minutes decision by decision, including the recovered Campania~1 seat that identifies Alessandra Todde in Lombardia~2 although she is already elected in Sardinia, moves her there (7.6198\% $<$ 21.8028\%), and lets Susanna Cherchi take over the Sardinian seat.

\subsection{What the implementation does \emph{not} decide}

Two aspects are deliberately kept outside the code. First, the choice among A, B and C is an input, not a hard-coded interpretation: the point of the study is to compare them. Second, wherever the law itself is silent (the rare compensation scenarios identified in~\cite{crafa2023}, e.g.\ exact ties of unused decimal parts across constituencies), the implementation resolves the tie by the statutory ``larger national tally, then draw'' rule and logs the event; none of these scenarios occurs on the 2022 data, so no result below depends on such a choice.

\section{Validation against the official outcome}
\label{sec:validation}

We validate at four levels of increasing strictness. The comparison target is the official outcome as published by the Chamber of Deputies: the national apportionment decided by the National Central Electoral Office, and the individual proclamations recorded in the Chamber's linked open data~\cite{daticamera}.

\paragraph{Level 1: exact national apportionment.} Phase~2, executed on the raw municipal data with thresholds computed rather than assumed, reproduces the official apportionment \emph{exactly}, both across units (letter~(f)) and within coalitions (letter~(g)); see Table~\ref{tab:apportionment}. This is a strong end-to-end check: it certifies simultaneously the vote aggregation, the pro-rata reallocation of candidate-only votes, the threshold logic (including the minority rule and the 1\% coalition-tally bar) and the truncated-quotient apportionment.

\begin{table}[t]
\centering\small
\caption{National apportionment of the 245 proportional seats: simulation vs.\ official. The simulation values are \emph{computed} from the raw municipal data; they coincide exactly with the official values.}
\label{tab:apportionment}
\begin{tabular}{lrr@{\hspace{2.5em}}lrr}
\toprule
\multicolumn{3}{c}{Units --- Art.~83(1)(f)} & \multicolumn{3}{c}{Lists --- Art.~83(1)(g)} \\
\cmidrule(r{2em}){1-3}\cmidrule{4-6}
Unit & Tally (votes) & Seats & List & Tally (votes) & Seats \\
\midrule
Centre-right coalition & 12{,}048{,}826 & 114 & Fratelli d'Italia & 7{,}305{,}018 & 69 \\
Centre-left coalition & 7{,}170{,}619 & 68 & Partito Democratico & 5{,}358{,}462 & 57 \\
Movimento 5 Stelle & 4{,}336{,}660 & 41 & Movimento 5 Stelle & 4{,}336{,}660 & 41 \\
Azione--Italia Viva & 2{,}187{,}296 & 21 & Lega & 2{,}464{,}487 & 23 \\
SVP--PATT & 117{,}010 & 1 & Forza Italia & 2{,}279{,}321 & 22 \\
 & & & Azione--Italia Viva & 2{,}187{,}296 & 21 \\
 & & & Alleanza Verdi e Sinistra & 1{,}018{,}344 & 11 \\
 & & & SVP--PATT & 117{,}010 & 1 \\
\midrule
Total & & 245 & & & 245 \\
\bottomrule
\end{tabular}
\end{table}

\paragraph{Level 2: national list tallies.} After the pro-rata reallocation, the national tally of every list agrees with the officially published one within $\pm0.06\%$. The residual discrepancy is a property of the input, not of the algorithm: the municipal open-data file is very close to, but not identical with, the definitive certified counts. This noise level matters for interpreting Level~3.

\paragraph{Level 3: name-by-name comparison.} We match every simulated MP against the official proclamations (mandates starting October 2022) by name tokens, order-insensitively; where one source records middle names that the other omits (three cases, e.g.\ ``Giulio Tremonti'' vs.\ ``Giulio Carlo Danilo Tremonti''), the match is completed by requiring the same college and list. Two typographic inconsistencies in the ministry's candidate file (the same candidate recorded with two different birth dates in different colleges, A.~Soumahoro, C.~Taibi) are reconciled to the majority date at load time; without this repair the identity key splits, a multiple candidacy goes undetected by Art.~85, and one candidate is elected twice. The result, for the reference variant C, is summarised in Table~\ref{tab:names}: 389 of 391 seats (99.5\%) are matched by name, including \emph{all} 146 single-member winners (which depend only on district pluralities) and 243 of 245 proportional seats (99.2\%). No candidate is elected twice, all 27 constituencies have exactly their statutory seat totals, and all corner-case chains (the multi-candidacy resolutions and the six exhausted-list seats) reproduce the officially proclaimed names.

\begin{table}[t]
\centering\small
\caption{Name-level validation of variant C against the official proclamations.}
\label{tab:names}
\begin{tabular}{lrrr}
\toprule
& Seats modelled & Matched by name & \\
\midrule
Single-member districts & 146 & 146 & (100\%) \\
Proportional seats & 245 & 243 & (99.2\%) \\
\midrule
Total & 391 & 389 & (99.5\%) \\
\bottomrule
\end{tabular}
\end{table}

The two unmatched seats are two of the four seats of the 2022 allocation that the Chamber's Committee on Elections (\emph{Giunta delle elezioni}), the body constitutionally charged with validating the election, placed under formal investigation on 22~July 2025, in both cases on the machinery of Arts.~83-\emph{bis}--85 that assigns seats and persons to colleges~\cite{giunta2025}.

The first concerns one FdI seat in Toscana: our run elects Irene Gori in college P01, whereas the official proclamation elects Francesco Michelotti in P02 (simulation and proclamation agree on the number of FdI seats in every Tuscan college). The minutes of the Regional Central Electoral Office (\emph{Ufficio centrale circoscrizionale}, UCC) for Toscana report the FdI lists' percentage tallies (the very quantities that Art.~85 uses to resolve multiple candidacies) \emph{twice, with different values}. The tallies computed from the figures set out in the body of the minutes (p.~24, \S11 of that \emph{verbale}) place Alessandro Amorese and Irene Gori in P01 and push Chiara La~Porta down to P02 (where her percentage is the lower of the two, so Art.~85(1) elects her there and P01 has a free seat for Gori); a second set of tallies reported later in the same document (p.~51) gives La~Porta 28.20\% in P01 and elects her there, so that P02 (after the list's next two names, Fabrizio Rossi and Chiara Colosimo, are skipped as already elected in single-member districts) falls to Michelotti. Our implementation computes the Art.~85 percentages from the underlying tallies and so reproduces the first reading; the official proclamation followed the second. The discrepancy did not go unnoticed at the time: on Irene Gori's appeal, the Committee resolved to open an \emph{istruttoria} into the proclamation of Michelotti in Toscana~P02 and to defer the final determination of La~Porta's college of election, expressly on the ground of ``the modalities of application of the rules on multiple election (the combined provisions of Art.~85, paragraphs~1 and 1-\emph{bis})'' and of ``the uncertainties connected to the method of calculation followed by the Toscana UCC for the different determination of the percentage figures reported in two points of the [minutes]''~\cite{giunta2025}.

The second concerns the last seat of Alleanza Verdi e Sinistra, whose lists are headed in four colleges by Aboubakar Soumahoro. Our run places AVS's Puglia seat in college P02 (through the Art.~83-\emph{bis} compensation), proclaims Soumahoro there (his lowest list percentage), and the seat thereby freed in Lombardia~1--P01 elects the next name, Giovanni Paglia. The official proclamation placed Soumahoro in Emilia-Romagna~P02, Eleonora Evi in Lombardia~1--P01 and Elisabetta Piccolotti in Puglia~P04. The choice between the two Puglia colleges rests on the comparison of AVS's unused decimal parts in P02 and P04, a comparison that, in the Committee's own account, already at proclamation rested on ``a very small margin'', and that the Committee's verification, having found inversions of list votes in the returned counts, could not settle without further investigation~\cite{giunta2025}. Which Puglia college (P02 or P04) receives the AVS seat, and in which colleges Soumahoro and Evi are proclaimed, is precisely the object of the second \emph{istruttoria} of the same session, on Piccolotti's proclamation with deferred determination of Evi's and Soumahoro's colleges of election~\cite{giunta2025}. The same session opened two further \emph{istruttorie} on neighbouring questions (FdI seats between Emilia-Romagna colleges under the combined Art.~84(2)/85 rules, and a tally-correction case in Puglia); the five cases in which our simulation elects the same person in a different college (La~Porta, Soumahoro, Evi, Vinci, Lucaselli) all fall within the seats they cover.

Neither residual is an instance of the paper's central ambiguity: both live in the Arts.~83-\emph{bis}--85 machinery (college attribution and resolution of multiple candidacies), not in the order-dependent exclusion clause of \arth{}, and both are triggered by unsettled points of the official computation (internally inconsistent percentage tallies in one case, a college attribution resting on a very small decimal margin in the other) that the Chamber's own verification body is still adjudicating three years on. We accordingly classify them not as realisations of the \arth{} ambiguity but as further entries in the inventory of text--practice gaps (Section~\ref{sec:discussion}). We keep our upfront rule for skipping district winners (it is equivalent to the official identify-then-substitute procedure of Art.~86(1) everywhere in 2022 except in the contested Toscana college) and report the divergences exactly as what they are: not defects to be tuned away, but the empirical footprint of points where the official computation itself is genuinely unsettled.

Separately, we record one further difference between our run and today's open data that is not an ambiguity at all: for the Calabria P01 college, the simulation reproduces the \emph{2022 proclamation} (Orrico elected in the Cosenza district, Scutell\`a taking the P01 list seat via Art.~84(3)), whereas the current open data reflect the seat reassignment ordered after the 2025 judicial recount of the Cosenza district. This is a difference between the 2022 and the present composition of the Chamber, not a simulation error, and it does not enter the 389/391 count (which is taken against the 2022 proclamations).

\paragraph{Level 4: step-level comparison with the official minutes.} The minutes of the National Central Electoral Office (\emph{verbale UECN}, Modello n.~89 E.P.) are published as scanned images; we transcribed the three available parts (the apportionment worksheets and the section on substitutions, exhausted lists and multiple elections) by OCR, proof-reading the extracted values used in the comparison against the page images, and compared them with the simulation \emph{step by step}. The transcriptions are archived with the deposit (Section~\ref{sec:repro}), so every extracted value can be re-checked against the published scans; since the comparison is value-by-value, a residual transcription error would surface as an isolated mismatch inexplicable by data noise, and none was observed. The agreement is essentially exact. The constituency-level decimal parts tabulated in the minutes match ours within $\pm 0.001$ (e.g., for M5S: Sardegna 0.678213 official vs.\ 0.6781 ours; Veneto 2 0.746397 vs.\ 0.7466; Lombardia 2 0.635208 vs.\ 0.6348), and their used/unused flags match exactly; the college-level percentage tallies used by Art.~85 agree to within the residual data noise (typically $10^{-3}$, occasionally $10^{-2}$; e.g.\ Carfagna, Puglia P04: 4.625339 official vs.\ 4.6246 ours; Todde, Lombardia 2 P01: 7.629735 vs.\ 7.6198), and every Art.~85 decision recorded in the minutes coincides with the simulation's; the college decimal parts match once the college quotient is computed on the admitted lists (M5S, Lombardia 2 P01: 0.340704 official vs.\ 0.3398 ours; the all-lists reading would give 0.3048); and the recovery chains coincide decision by decision: the minutes assign the stranded Campania 1 seat to Lombardia 2 ``with the largest unused decimal part, 0.635208'', identify Alessandra Todde there, resolve her multiple election in favour of Lombardia 2, and let Cherchi take over in Sardinia, exactly as the simulation does; the Veneto 1 FdI seat goes to Lazio 2 -- P02 and Paolo Pulciani in both. The residual noise on decimal parts and percentages is the footprint of the near-definitive municipal file and never crosses a decision margin in the minutes we transcribed.

\paragraph{The diagnostic power of name-level validation.} The name-by-name comparison is not only a validation score: it is strict enough that a single misread sentence of the statute produces a systematic disagreement pattern that input noise cannot explain. Three points of the procedure were settled in exactly this way, against the statute and the minutes: (1) the last sentence of Art.~83-\emph{bis}(1) selects donor and recipient colleges \emph{independently}: importing the same-constituency priority of \arth{} at this level displaces two dozen college placements; (2) the orderings of Art.~84 are two-tiered (unused decimal parts first) with one recovered seat per decimal part, and the recovered candidate may be already elected elsewhere, the conflict being resolved afterwards under Art.~85; (3) the college quotient is computed on the admitted lists only, as the decimal parts in the minutes prove. The methodological point is that reproducing an electoral law at the level of \emph{names} is discriminating enough to detect each of these misreadings, and that the official minutes, once machine-readable, adjudicate every remaining doubt. Crucially, all three points concern the \emph{fixed machinery} (Art.~83-\emph{bis}, Art.~84 and the college quotient) that every A/B/C run executes identically (Section~\ref{sec:impl-h}); they are a controlled variable common to all variants, and they were settled against an \emph{independent} record (the minutes), not against the proclamation names. Settling them therefore cannot manufacture the A-vs-C differences that are this paper's object: those arise entirely from the interpretation of \arth{}/\arti{}, which no validation step ever tuned. The fixed procedural machinery was settled first and then held frozen while the interpretive variants were compared.

\paragraph{Validation protocol.} Because the previous paragraph reports corrections made in the course of validation, we state precisely what was fixed when, so that the reader can judge, and exclude, the risk of tuning the implementation to the outcome. (i)~The data reconstruction and Phases~1--2 (vote aggregation, pro-rata reallocation, thresholds, national apportionment) were implemented from the statute and the data documentation alone, and never adjusted afterwards; their validation targets (Levels~1--2) are aggregate official figures, not names. (ii)~Exactly three implementation decisions were corrected after the name-level comparison first ran: the two Art.~83-\emph{bis}/Art.~84 rules and the college-quotient basis listed above. Each was settled against the official minutes (an independent record of the intermediate computation) and not by searching for name agreement; each corresponds to an identifiable sentence of the statute or of the minutes, quoted where introduced. (iii)~All three belong to the fixed machinery that every variant executes identically; none touches the implementation of \arth{}/\arti{}, which was written once from the statutory text and never modified. (iv)~The variant experiments (A~vs.~C, the order studies, B) were run after the fixed machinery was frozen, and their results were never used to revise it. (v)~The archived deposit preserves the audit logs of every run and the scripts of every experiment (Section~\ref{sec:repro}), so the entire sequence can be re-executed end to end.

We regard Levels 1--4 together as strong evidence that the implementation is a faithful executable model of the law as officially applied: it is exact wherever the input data are exact, and deviates only where the input deviates, in the direction and magnitude that the deviation predicts.

\section{Results: the ambiguity on real data}
\label{sec:results}

All experiments run the full pipeline end-to-end (from raw votes to named MPs), changing only the interpretation of \arth{}/\arti{} (A, B or C) and, where indicated, the processing order of the constituencies. The reference order is the order in which the official UECN worksheets tabulate the 27 constituencies (the order of Appendix~\ref{app:letth}); it differs from the order of Table~A annexed to the \dpr{} in one point: the worksheets place Trentino-Alto Adige last, after Sardegna, whereas Table~A lists it seventh. We adopt the worksheets' order as the reference (hereafter also ``the official order'') because it is the one documented in the official tallying record; under interpretation C the choice is immaterial (Proposition~\ref{prop:C}), and the order study below quantifies its effect under A. The permuted order is the reversal of the reference order. Results are compared as sets of elected persons and, for persons elected in both runs, as their college of proclamation.

\begin{table}[t]
\centering\small
\caption{Effect of the interpretation of \arth{} on the 2022 election. ``Replaced'' counts deputies elected under one run and not under the other (per side); ``displaced'' counts deputies elected in both runs but proclaimed in different colleges. The order experiment compares the official and the reversed constituency order; ``vs.\ C'' compares each variant (official order) with the reference variant C.}
\label{tab:variants}
\begin{tabular}{lcccc}
\toprule
& \multicolumn{2}{c}{Order effect} & \multicolumn{2}{c}{vs.\ C (official order)} \\
\cmidrule(lr){2-3}\cmidrule(l){4-5}
Interpretation & replaced & displaced & replaced & displaced \\
\midrule
A (sequential) & $6+6$ & 1 & $4+4$ & 0 \\
B (Mattarella-style) & \multicolumn{2}{c}{not measured} & $11+12$ & 1 \\
C (operational practice) & 0 & 0 & --- & --- \\
\bottomrule
\end{tabular}
\end{table}

Table~\ref{tab:variants} summarises the headline findings; Section~\ref{sec:names} identifies every affected deputy by name and re-derives each change step by step.

\paragraph{Algorithm A is order-dependent at a politically visible scale.} Reversing the constituency order changes the territorial placement of enough residual seats that \textbf{12 people enter or leave the Chamber} (6 elected only under the official order, 6 only under the reversed order) and one more deputy is proclaimed in a different college. Both runs execute the same admissible reading of the text: the law neither fixes an order nor makes the outcome order-invariant under interpretation A. To quantify the phenomenon beyond a single permutation, we additionally ran Algorithm A under \textbf{200 uniformly random constituency orders} (same votes, same algorithm). Relative to the practice outcome (C), the number of deputies elected under the sampled order but not in practice ranges from 0 to 13 (mean 5.4, median 6), the number of practice deputies who lose their seat from 0 to 14 (mean 5.8; the two counts differ whenever the order strands seats, a phenomenon quantified below), and the number of displaced deputies from 0 to 4 (mean 0.8); relative to A under the official order, the entering side ranges from 3 to 13 (median 7). The 200 orders produce \textbf{161 distinct outcomes}; at least one random order reproduces the practice parliament exactly, and none reproduces the official-order run of A. The reversal experiment of Table~\ref{tab:variants}, and the names of Section~\ref{sec:names-Arev}, are therefore a typical instance of the phenomenon, not an extremal one. This is the empirical confirmation, at full scale, of the core finding of~\cite{crafa2023}.

\paragraph{Sampling design of the order experiment.} The 200 orders are drawn uniformly at random (Fisher--Yates shuffles from fixed, published seeds; the sampling script is archived with the deposit, Section~\ref{sec:repro}), so the counts above are descriptive statistics of the uniform distribution over the $27!$ possible orders, estimated from 200 samples. To gauge the sampling error we extended the experiment, under the same seed family, to 1{,}000 orders (the first 200 coincide with those reported above): the mean number of entering deputies relative to practice is 5.20 (bootstrap 95\% interval $[5.03, 5.36]$), the median 5 ($[5,6]$), the range still 0--13 (0--14 on the leaving side, mean 5.53), and the 1{,}000 orders produce 560 distinct outcomes; 74 of the 1{,}000 orders (7.4\%) reproduce the practice parliament exactly, and none reproduces the official-order run of A. Two caveats delimit what these numbers mean. First, they estimate the diversity of outcomes under uniformly random orders, not real-world probabilities: the law fixes neither a processing order nor a distribution over orders, and a returning office would presumably follow some conventional order rather than a random one. The uniform distribution is a device for exploring the space of admissible executions, and we accordingly report ranges and counts rather than significance tests. Second, the number of distinct outcomes is a lower bound that grows with the number of samples (161 at 200 orders, 560 at 1{,}000): it measures the sampled diversity of the outcome space, not the total number of outcomes the text admits.

\paragraph{Under arbitrary orders, A strands seats --- and can move one between parties.} The sweep exposes a defect of the sequential reading that no documented execution shows. In \textbf{292 of the 1{,}000 sampled orders (29.2\%)} Algorithm~A cannot assign one residual seat (249 orders) or two (43 orders): the Chamber would convene with 389--390 of the 391 modelled members. The mechanism is fully statutory, and the first sampled order shows it at work. Marche is processed last, with 3 residual seats; by then M5S stands at 42 of its 41 national seats (integer parts are never capped, since the exclusion clause of \arth{} governs only the residual distribution, and Toscana's integer parts pushed it past its entitlement), and A--IV is saturated at 21 of 21. The two eligible units, CD and CS, take one residual seat each, and the third finds no taker: each unit has a single decimal part per constituency, so a unit still below its national total (CD, at 112 of 114) cannot take a second residual seat there. The compensation of \arth{}, which can only transfer surpluses, recovers one seat (M5S to CD) but cannot create the missing one. This is the same class of textual silence documented for Algorithm~B in Section~\ref{sec:names-B}, and the empirical realisation, at scale, of the underspecified compensation scenarios identified analytically by Crafa~\cite{crafa2024}: absent under both documented orders of A, it occurs in 29\% of uniformly random admissible executions, always at the expense of the centre-right lists (Lega and/or Forza Italia in every incomplete order). In \textbf{89 further orders (8.9\%)} the pipeline completes (all 391 seats filled) but with a \emph{different party composition}: the seat lost by a centre-right list at the constituency level is refilled at the college level by the Art.~83-\emph{bis} machinery on a different list (Forza Italia $-1$/Partito Democratico $+1$ in 46 orders, Forza Italia $-1$/M5S $+1$ in 35, Lega $-1$/M5S $+1$ in 6, Lega $-1$/Partito Democratico $+1$ in 2), through the same mechanism as the recreated PD seat documented for B in Section~\ref{sec:names-B}. Only \textbf{619 of the 1{,}000 sampled orders} reproduce the official party totals exactly. As with the B name lists, these mechanical completions rest on choices the text does not define; the counts of stranded and migrated seats, however, are exact consequences of interpretation A and of the documented compensation machinery (the per-order results are archived with the deposit, Section~\ref{sec:repro}).

\paragraph{A per-deputy view of the order experiment.} The same 1{,}000 runs, read per person rather than per order, show who the order sensitivity actually puts in play. Across the sampled orders, 458 distinct persons are elected (to 391 seats when the order completes the Chamber, 389--390 when it strands seats): 339 deputies of the practice parliament (including all 146 single-member winners, whose election does not depend on \arth{}) are elected under \emph{every} sampled order, while 119 persons are elected under some orders and not others; Appendix~\ref{app:perm} lists them all with their frequencies. Of the practice parliament, 52 deputies miss election under some orders, though none under all (every member of the practice parliament is elected under at least 300 of the 1{,}000 orders), and conversely 67 persons who hold no seat in practice enter the Chamber under some order. The frequency distribution is markedly bimodal: 42 of the 119 are elected under more than 900 of the 1{,}000 orders and 59 under fewer than 100, so the typical effect of changing the order is to trade a specific near-safe deputy for a specific long-shot substitute; but for a handful of persons the seat approaches a coin flip: Marco Squarta (FdI, Umbria) is elected under 399 of the 1{,}000 orders, Umberto Bossi (Lega, Lombardia 2) under 386, Rosa D'Amelio (PD, Campania 2) under 382, Devis Dori (AVS, Lombardia 2) under 307, Valeria Alessandrini (Lega, Umbria) under 301, Catia Polidori (FI, Umbria) under 300. One case stands out: Roberto Rampi (PD, Lombardia 2) holds no seat in practice, yet is elected under 707 of the 1{,}000 orders: under a uniformly random admissible execution of interpretation A he is a deputy more often than not. As with the parliament counts above, these frequencies describe the uniform sample of admissible orders, not real-world probabilities.

\paragraph{A near-exhaustive extension to 15 million orders.} The 1{,}000-order sweep is a sample, and the number of order-contingent persons it identifies is a lower bound that could still grow. To test how far, we exploited the structure of the pipeline: the processing order enters only through Phase~3 (the territorial distribution of \arth{}/\arti{}), while Phases~4--5 (college placement and proclamation, Arts.~83-\emph{bis}, 84--85) are a deterministic function of its output. Phase~3 costs $\sim$0.8\,ms, some two orders of magnitude less than a full run, so we evaluated \textbf{15 million uniformly random orders}, deduplicated them on the order-dependent Phase~3 state (obtaining 41{,}953 distinct configurations), and ran Phases~4--5 once per distinct configuration; a consistency test confirms that this factored evaluation reproduces the full pipeline exactly on every order checked. The set of order-contingent persons rises from the 119 of the 1{,}000-order sample to \textbf{140} and then closes: across the 41{,}953 configurations the last new order-contingent deputy appears after about 17{,}500 of them, and none of the remaining $\sim$24{,}500 adds another; the number of distinct Phase-3 configurations is itself approaching saturation (fewer than $0.05\%$ of the final million orders produce a configuration not already seen). Across all 15 million orders, 469 distinct persons are elected, 329 under \emph{every} order (including the 146 single-member winners), and 140 are order-contingent: 62 deputies of the practice parliament who lose their seat under some order, and 78 who hold no seat in practice but enter under some order. Referred to the pool that actually competes for the 245 proportional seats --- the \textbf{1{,}139 candidates of the eight lists admitted to the distribution} --- the arbitrariness of the processing order thus leaves \textbf{764 of them (67.1\%) elected under no order at all}, seats a stable core, and decides the fate of the remaining $\sim$12\%. This is a near-exhaustive exploration, not a proof: an exact enumeration of the outcome set is obstructed by the path-dependence of the statutory compensation (the residual-and-compensation state does not collapse over the $27!$ orders), so the closure of the order-contingent set is strong empirical evidence, not a theorem. The scripts and the per-configuration outputs are archived with the deposit (Section~\ref{sec:repro}).

\paragraph{Algorithm C is order-independent.} Under interpretation C, the official order, the reversed order and 50 additional uniformly random orders all produce \emph{identical} parliaments: zero replaced, zero displaced. This is not an accident of the data:

\begin{proposition}[order invariance of C in the absence of ties]
\label{prop:C}
Fix the votes and the seat entitlements. Assume that (i) in every constituency the decimal parts of the quotients of attribution of the admitted units are pairwise distinct, and (ii) at every step of the compensation procedure the statutory selection criteria (number of surplus seats, decimal parts, national tallies) identify a unique choice. Then the outcome of Algorithm~C, followed by the statutory compensation, is the same for every processing order of the constituencies.
\end{proposition}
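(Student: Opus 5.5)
The plan is to show that Algorithm~C, followed by the compensation, is a composition of maps in which the constituency order never enters as an argument. I would set it up as three stages.

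\emph{Stage 1: integer parts and the exclusion set.} For each constituency $c$, the quantities $q_c$, $b_c[u]$ and $d_c[u]$ depend only on the constituency tallies of $c$ and on its predetermined seat number $s_c$. They therefore do not depend on the order. After the first loop, $\mathit{tot}[u]=\sum_c b_c[u]$, which is a sum over a set and hence order-free. So $E=\{u:\mathit{tot}[u]\ge T[u]\}$ is a fixed function of the votes and entitlements.

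\emph{Stage 2: residual seats.} Inside constituency $c$, the residual loop reads only $d_c$, the fixed set $E$, and the number of residual seats $r_c=s_c-\sum_u b_c[u]$. The update of $\mathit{tot}$ is written but never read, since the test is $u\notin E$ and not $\mathit{tot}[u]<T[u]$. Hypothesis~(i) makes the ranking by decreasing $d_c[u]$ a strict total order. So the set $R_c$ of units receiving a residual seat in $c$ is uniquely determined: it consists of the first $r_c$ units of $\{u\notin E\}$ in that ranking, one seat each. The same holds for the record of which decimal parts are ``used''.

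Since the constituencies share no mutable state that is read, the full matrix $\mathit{seats}[c,u]=b_c[u]+[u\in R_c]$ is order-independent. I would state this formally as: the loop body for $c$ is a function of $(c,E)$ alone, so any permutation of the loop yields the same family of outputs.

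\emph{Stage 3: compensation.} I would argue by induction on the number of compensation moves. The state consists of the seat matrix, the used/unused flags and the national totals. This state is identical across orders at the start, by Stage~2. Each move is determined by the statutory criteria: the surplus unit with most excess, the constituency with the smallest used decimal part, and the recipient with the largest unused decimal part, same-constituency first, otherwise the cross-constituency fallback. These criteria are functions of the current state only; none refers to a constituency position in the processing order. By hypothesis~(ii), each criterion selects a unique choice, so the transition is a deterministic function of the state. Identical states therefore map to identical states, and the termination point (no surplus unit) is reached after the same number of moves.

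The downstream Phases~4--5 are deterministic functions of the Phase~3 output, so the named outcome is invariant too. Strictly speaking, that last step is beyond the proposition's claim, but it is worth remarking.

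The main obstacle is Stage~3, specifically making sure the compensation never uses the processing order implicitly. Two places need care. One is tie-breaking among constituencies, for instance equal smallest decimal parts for the donor across two constituencies: distinct decimal parts are assumed only within a constituency, so hypothesis~(ii) must be invoked to cover this. The other is the cross-constituency fallback, where an implementation might scan constituencies ``in order''. I would handle both by describing each selection as an $\arg\min$ or $\arg\max$ over a set, which (ii) makes a singleton, so that no enumeration order is ever consulted.
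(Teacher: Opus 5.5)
Your proposal is correct and follows essentially the same route as the paper's proof sketch. Integer parts and the static exclusion set $E$ are order-free, the residual assignment reads no cross-constituency state, and every compensation step is selected by order-free criteria that hypothesis (ii) makes unique; your induction on compensation moves and your max/min-over-a-set formulation of the donor and recipient choices simply spell out what the paper compresses into one sentence.
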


\begin{proof}[Proof sketch]
Integer parts are computed constituency by constituency from order-free inputs (tallies and per-constituency seat counts), so their totals are order-free; the static exclusion set $E$ is a function of those totals, hence order-free; residual seats are then assigned \emph{within} each constituency by a rule (largest decimal parts among units not in $E$, ties broken by national tally) that depends only on that constituency's own quantities and on $E$ (there is no cross-constituency state), so the union of the residual assignments is order-free; finally, the compensation is driven by comparisons of order-free quantities (surplus counts, decimal parts, national tallies), which under (ii) determine a unique sequence of transfers. Hypotheses (i)--(ii) hold on the 2022 data: we verified computationally that no tie occurs in any comparison actually performed. Were they to fail, the law's own tie-breaking (larger national tally, then draw) applies, and order invariance would hold up to the outcome of the draws.
\end{proof}

\paragraph{Ties.} One point of the tie-handling deserves the main text, because it matters for any legislative formulation of the procedure. For strict order-invariance to survive ties, the draw must not consume a sequentially shared pseudo-random state, which would make later draws depend on the processing order; a canonical tie-break (e.g.\ seeding the draw on the invariant identity of the list or constituency rather than on its position in the sweep) restores exact order-invariance. No draw is triggered on the 2022 data, so this affects no result here, but a statutory algorithmic formulation of the procedure should specify it.

\paragraph{A and C elect different parliaments.} Under the same (official) order, interpretations A and C differ in \textbf{8 deputies} (4 elected only under A, 4 only under C). Since C is the interpretation applied in practice while A is the most immediate sequential reading of the text (in the algorithmic sense of Section~\ref{sec:legal-scope}), these seats quantify, on real data, the gap between the enacted text and the executed procedure.

\paragraph{Algorithm B loses seats.} Interpretation B leaves \textbf{2 seats of the centre-right coalition unassigned}: when the residual-seat sweep reaches Sicilia~2 and Sardegna, every admitted unit is either saturated or has already used its decimal part there, and the seats cannot be attributed by any rule stated in the text (Section~\ref{sec:names-B}). Because B never over-assigns, there are no surplus units, and the statutory compensation (which can only \emph{transfer} seats from surplus to deficit units) has nothing to transfer. We phrase the consequence carefully. Within our mechanical implementation of B, which executes exactly the rules the text states, the two seats remain unassigned and the Chamber would convene with 398 of 400 members. Under the Mattarella law, whose Algorithm-B-like procedure produced the same situation in 1994, 1996 and 2001, the stranded seats were recovered in practice by assigning them in the constituencies where the entitled lists had their highest unused remainders~\cite[p.~66]{camera2019}; the text in force in 2022 contains no such suppletive rule for this step, and whether the National Electoral Office could legitimately construct one by analogy is a question of law outside our scope. What the experiment establishes is that interpretation B \emph{requires} such an extra-textual rule on real data: the defect is not an artefact of the older electoral maps. Interpretation A, as the order sweep above shows, incurs the same textual silence under 29\% of the sampled orders. B is of the same kind and worse in degree: run end-to-end under the same 1{,}000 sampled orders, it fails to fill one to three seats in 45.7\% of them and reproduces the official party totals in only 15.7\%, against A's 29.2\% and 61.9\% (the Art.~83-\emph{bis} college machinery, as in the recreated PD seat of Section~\ref{sec:names-B}, often refills a constituency-level deficit on a different list; the per-order results are archived with the deposit, Section~\ref{sec:repro}).

Three questions about interpretation B must be kept separate, and the paper treats them at three different levels of confidence: (i)~\emph{what the text literally executes}: the mechanical run just described, which strands two seats; this is established computationally, and is as robust as the A-vs-C results. (ii)~\emph{What a completed B would produce}: the next paragraph describes the completion suggested by the Mattarella-era practice, and Table~\ref{tab:names-B} the resulting names; both are conditional on an extra-textual completion rule that the text does not define, and are reported as indicative only. (iii)~\emph{Whether B is legally admissible}: a doctrinal question (Section~\ref{sec:legal-scope}), on which the stranded seats are themselves perhaps the strongest evidence against, but which our experiments cannot settle.

\paragraph{What a historically completed B would do.} For comparability with the completing variants, we describe (without adopting it) the completion the Mattarella-era practice would apply. The two stranded seats both belong to the centre-right, the sole deficit unit; the historical rule recovered such seats by assigning them, one each, in the constituencies where the entitled unit holds its largest still-\emph{unused} remainders, which would restore the centre-right to 114 and the Chamber to 400 members. This patch is itself extra-textual and not uniquely fixed: it must decide which remainders count as unused, and how the recovery propagates to the within-coalition level (\arti{}, where B also leaves Forza Italia and the Lega one seat short) and to the colleges (Art.~83-\emph{bis}). A ``completed B'' is therefore one reconstruction among several rather than a canonical variant, which is why we report B as the text literally executes it (398 seats) and treat the named differences of Table~\ref{tab:names-B} as indicative rather than exact.

\paragraph{Party totals are invariant under the documented executions; people and places are not.} In every documented execution (C under any order, A under the reference and the reversed order), the statutory compensation procedure restores the national totals of every coalition \emph{and} of every list to the values of Table~\ref{tab:apportionment}. Under the executions documented in practice, therefore, the ambiguity of \arth{} has \emph{no effect on the party composition} of the Chamber; its entire effect is on \emph{which} candidates of a given list are elected, and \emph{in which territory}. The order sweep above marks the boundary of this invariance: under arbitrary orders of the sequential reading, seats can go unassigned or migrate between lists. This is exactly the reading of the problem proposed by Crafa: the indeterminacy is constitutionally serious not because it could flip a majority, but because the identity of the elected representatives (and the territorial link between voters and their representatives) should not depend on an arbitrary implementation choice left open by the text. The point has a precise constitutional anchor: under Article~67 of the Italian Constitution every member of Parliament ``represents the Nation'' and serves ``without a binding mandate'': the unit of representation is the person, not the party share. The free mandate also makes the invariance of party totals an election-day property in a second, practical sense: each deputy remains free to change parliamentary group, and does so in practice (in the very legislature elected by the votes studied here, 39 deputies had changed group by mid-term, with further changes officially recorded through 2026, and preceding legislatures saw hundreds of such moves~\cite{openpolis2025,camera1083}), so that \emph{which} person fills a list's seat can, over the course of the legislature, move votes between groups in a way the seat-allocation arithmetic never registers.

\section{From seats to names: the elected deputies under each interpretation}
\label{sec:names}

This section identifies, by name, every deputy whose election depends on the interpretation of \arth{}, and re-derives each change step by step. All arithmetic can be checked against the appendices: Appendix~\ref{app:letth} publishes the complete \arth{} computation (constituency quotients, integer and decimal parts for all 27 constituencies and all 5 admitted units), Appendix~\ref{app:assh} the final unit-level assignments of all four runs, and Appendix~\ref{app:i} the inputs and final assignments of the within-coalition distribution (\arti{}). Nothing beyond these tables, the college tables printed in the case studies, and the candidate lists quoted in the text is needed to re-verify every name.

\subsection{Method: how a unit-level seat becomes a named deputy}
\label{sec:names-method}

A difference between two interpretations always originates at \arth{} (a coalition or single list gets a seat in a different constituency) and propagates through three deterministic steps:
\begin{enumerate}
\item \emph{\arti{}}: within the affected coalition, the changed constituency seat count changes the constituency quotient of the coalition's lists (the quotient is the truncated ratio of the sum of the list tallies to the coalition's seats there), hence possibly the list that takes the seat; the within-coalition compensation can move further seats between constituencies.
\item \emph{Art.~83-\emph{bis}}: within the constituency, the changed per-list seat count changes which college receives the seat (by the college decimal parts, with college-level compensation).
\item \emph{Arts.~84--85}: within the college, the seat goes to the highest-ranked candidate of the closed list who is not elected in a single-member district and not proclaimed elsewhere; multiple candidacies are resolved to the college with the lowest list percentage (Art.~85), so a seat moving between colleges can drag a multi-candidate with it and free a seat for the next name on another list.
\end{enumerate}

For the sweep of residual seats we describe each run by its \emph{divergence points}: the constituencies where the dynamic exclusion of step (6) causes an assignment different from the pure largest-decimal choice. Everything before the first divergence point is identical in all runs and can be read off Appendix~\ref{app:letth}.

Two facts, valid in all runs, anchor the analysis. First, the integer parts of \arth{} assign 190 of the 245 seats (centre-right (CR) 100, centre-left (CL) 54, M5S 30, Azione--Italia Viva (A--IV) 6, SVP 0), leaving 55 residual seats (both facts follow by summation from Appendix~\ref{app:letth}). Since no unit reaches its national entitlement (114/68/41/21/1) on integer parts alone, the static exclusion set of Algorithm~C is \emph{empty}: under C, all 55 residual seats go to the largest decimal parts unconditionally, and the compensation step then repairs the totals. Second, the compensation moves of all runs are few and are listed exhaustively in Table~\ref{tab:comp}.

\begin{table}[t]
\centering\small
\caption{All compensation transfers at the \arth{} and \arti{} levels, in execution order (surplus unit $\rightarrow$ receiving unit; the value in parentheses is the surplus unit's decimal part in that constituency, which the law requires to be minimal among its residual-seat gains). Under B no transfer is possible: there are no surplus units, and the deficits (CR $-2$ at letter (h); FI $-1$, Lega $-1$ at letter (i)) remain.}
\label{tab:comp}
\begin{tabular}{llll}
\toprule
Run & Level & Constituency & Transfer \\
\midrule
C & (h) & Lombardia 1 & M5S $\rightarrow$ CR \quad (0.5189) \\
C & (h) & Lombardia 3 & M5S $\rightarrow$ CR \quad (0.5532) \\
C & (h) & Lombardia 2 & M5S $\rightarrow$ CR \quad (0.6348) \\
C & (i, CR) & Umbria & FdI $\rightarrow$ FI \quad (0.3577) \\
C & (i, CL) & Veneto 2 & PD $\rightarrow$ AVS \quad (0.5021) \\
C & (i, CL) & Sardegna & PD $\rightarrow$ AVS \quad (0.5753) \\
C & (i, CL) & Campania 2 & PD $\rightarrow$ AVS \quad (0.6230) \\
C & (i, CL) & Lombardia 2 & PD $\rightarrow$ AVS \quad (0.6472) \\
\midrule
A & (h) & Lombardia 1 & M5S $\rightarrow$ CR \quad (0.5189) \\
A & (h) & Lombardia 3 & M5S $\rightarrow$ CR \quad (0.5532) \\
A & (i, CR) & Umbria & FdI $\rightarrow$ FI \quad (0.3577) \\
A & (i, CL) & Veneto 2 & PD $\rightarrow$ AVS \quad (0.5021) \\
A & (i, CL) & Campania 2 & PD $\rightarrow$ AVS \quad (0.6230) \\
\midrule
A$_{\mathrm{rev}}$ & (h) & Lombardia 1 & CL $\rightarrow$ CR \quad (0.2467) \\
A$_{\mathrm{rev}}$ & (h) & Lombardia 3 & M5S $\rightarrow$ CR \quad (0.5532) \\
A$_{\mathrm{rev}}$ & (i, CR) & Umbria & FdI $\rightarrow$ Lega \quad (0.3577) \\
A$_{\mathrm{rev}}$ & (i, CR) & Basilicata & FdI $\rightarrow$ FI \quad (0.4976) \\
A$_{\mathrm{rev}}$ & (i, CL) & Veneto 2 & PD $\rightarrow$ AVS \quad (0.5021) \\
A$_{\mathrm{rev}}$ & (i, CL) & Sardegna & PD $\rightarrow$ AVS \quad (0.5753) \\
A$_{\mathrm{rev}}$ & (i, CL) & Campania 2 & PD $\rightarrow$ AVS \quad (0.6230) \\
\bottomrule
\end{tabular}
\end{table}

\subsection{Interpretation A versus practice (C), official order}
\label{sec:names-AC}

\begin{table}[t]
\centering\small
\caption{Named differences between interpretation A and interpretation C (both under the official constituency order). Every other one of the 391 seats is held by the same person in the same college under both interpretations.}
\label{tab:names-AC}
\footnotesize
\begin{tabular}{llll}
\toprule
& Deputy & List & College \\
\midrule
\multirow{4}{*}{\shortstack[l]{Elected under C,\\ \emph{not} under A}}
& Umberto Bossi & Lega & Lombardia 2 -- P01 \\
& Devis Dori & AVS & Lombardia 2 -- P02 \\
& Sara Ferrari & PD & Trentino-A.A. -- P01 \\
& Maria Stefania Marino & PD & Sicilia 2 -- P01 \\
\midrule
\multirow{4}{*}{\shortstack[l]{Elected under A,\\ \emph{not} under C}}
& Roberto Rampi & PD & Lombardia 2 -- P01 \\
& Diego Binelli & Lega & Trentino-A.A. -- P01 \\
& Romina Mura & PD & Sardegna -- P01 \\
& Pierpaolo Montalto & AVS & Sicilia 2 -- P02 \\
\bottomrule
\end{tabular}
\end{table}

Table~\ref{tab:names-AC} lists the named differences. We now derive them.

\paragraph{The divergence at \arth{}.} Under the official order, interpretations A and C make identical residual-seat choices in the first 25 constituencies: no running total crosses a national entitlement early enough for the dynamic exclusion of A to bite there. The divergence is confined to the tail of the sweep. Following the running totals of Algorithm~A (all decimal parts from Appendix~\ref{app:letth}):

\begin{itemize}
\item By the end of Calabria (23rd), M5S has 36 seats; the integer parts of Sicilia 1, Sicilia 2 and Sardegna (3, 3 and 1) bring it to 43, \emph{above} its national entitlement of 41. (Under C this overshoot is tolerated and repaired at the end; under A it triggers exclusions.)
\item \textbf{Sardegna} (26th; 7 proportional seats, unit-level quotient $q_c = \lfloor 623{,}528/7 \rfloor = 89{,}075$, 2 residual seats). The first residual seat goes to CR (decimal 0.9611, not saturated: 110/114). The second, under C, goes to M5S (0.6781). Under A, M5S is saturated (43/41) and skipped; A--IV is saturated (21/21, decimal 0.3547) and skipped; the seat falls to \textbf{CL with decimal part 0.0062}: the smallest decimal part of the entire election defeats 0.6781.
\item \textbf{Trentino-Alto Adige} (27th; 3 proportional seats, 2 residual). Under C the residuals go to CL (0.8562) and SVP (0.7649). Under A, CL is now saturated (68/68, thanks to the Sardinian seat) and skipped; SVP takes its seat (0.7649, reaching its entitlement of 1); the last seat falls to \textbf{CR with decimal part 0.0124}, defeating CL's 0.8562.
\end{itemize}

After the sweep, A has M5S at 43 ($+2$) and CR at 112 ($-2$): the compensation returns M5S's two smallest-decimal residual gains, Lombardia 1 (0.5189) and Lombardia 3 (0.5532), to CR. Under C, M5S ends at 44 ($+3$) and also returns \textbf{Lombardia 2} (0.6348). The net unit-level difference between A and C is therefore exactly three seats (Appendix~\ref{app:assh}): in \textbf{Lombardia 2}, CR has 6 seats under C but 5 under A, M5S 0 vs 1; in \textbf{Sardegna}, CL has 2 vs 3, M5S 2 vs 1; in \textbf{Trentino-Alto Adige}, CR has 1 vs 2, CL 1 vs 0.

\paragraph{Case study 1: Trentino-Alto Adige --- Sara Ferrari (PD) out, Diego Binelli (Lega) in.}
Trentino-Alto Adige has a single multi-member college (P01, 3 seats), so \arti{} decides everything. \emph{Under C}, CL has 1 seat; the CL quotient is $\lfloor(86{,}459 + 29{,}599)/1\rfloor = 116{,}058$; the quotients of attribution are PD $86{,}459/116{,}058 = 0.7450$ and AVS $0.2550$; the seat goes to PD, whose college list opens with \emph{Sara Ferrari} (elected). CR's single seat: quotient $\lfloor(94{,}823+43{,}109+16{,}945)/1\rfloor = 154{,}877$; decimals FdI 0.6122, Lega 0.2783, FI 0.1094; the seat goes to FdI (and FdI's national surplus of one seat is later returned in Umbria, not here; Table~\ref{tab:comp}). \emph{Under A}, CL has 0 seats in the constituency: Ferrari is not elected. CR has 2 seats: quotient $\lfloor 154{,}877/2 \rfloor = 77{,}438$; quotients of attribution FdI $94{,}823/77{,}438 = 1.2245$, Lega $43{,}109/77{,}438 = 0.5567$, FI $0.2188$; FdI takes the integer-part seat, and the residual goes to Lega (0.5567). Lega's college list in P01 is: 1.~Diego Binelli; 2.~Vanessa Cattoi (already elected in a single-member district, hence skipped); 3.~Filippo Maturi. \emph{Diego Binelli} is elected. The FdI seat elects the same person under both interpretations.

\paragraph{Case study 2: Sardegna --- Romina Mura (PD) in under A; the M5S bench survives by two different legal routes.}
Sardegna also has a single college (P01, 7 seats). \emph{Under C}, CL has 2 seats: quotient $\lfloor(128{,}596+34{,}675)/2\rfloor = 81{,}635$; PD $128{,}596/81{,}635 = 1.5753$, AVS $0.4248$; integer parts give PD 1, and the residual also goes to PD (0.5753 $>$ 0.4248), but PD ends the national sweep 4 seats above its entitlement of 57, and Sardegna (decimal 0.5753) is the second-smallest of its four residual gains, so the compensation returns it to AVS (Table~\ref{tab:comp}). Final: PD 1 (its list opens with Silvio Lai, elected), AVS 1. \emph{Under A}, CL has 3 seats: quotient $\lfloor 163{,}271/3 \rfloor = 54{,}423$; PD $128{,}596/54{,}423 = 2.3629$, AVS $34{,}675/54{,}423 = 0.6371$; PD takes 2 integer-part seats, AVS the residual. PD's national surplus under A is only $+2$ (Veneto~2 and Campania~2 are returned; Sardegna's PD seats are integer-part seats and cannot be touched by the compensation, which by the letter of the law removes seats obtained \emph{with the decimal parts}). Final: PD 2 (Silvio Lai and, second on the list, \emph{Romina Mura}, elected) and AVS 1 (same person as under C).

For M5S the constituency seat count drops from 2 (C) to 1 (A), yet, remarkably, the M5S bench ends up identical in persons \emph{and} colleges, through two entirely different legal routes; the case is worth following because it shows the recovery chain of Art.~84 responding to the ambiguity. In both interpretations M5S cannot fill two of its seats in Campania 1 -- P02, and Art.~84(4) sends them to the constituencies with the largest \emph{unused} constituency decimal parts, one each. \emph{Under C}, the ranking is headed by Lombardia 2 (0.6348, returned in compensation) and Lombardia 4 (0.5685): the first recovered seat identifies the Lombardia 2 -- P01 list leader Alessandra Todde \emph{although she is already elected in Sardinia}; the multiple election is resolved per Art.~85 in favour of Lombardia 2 (list percentage 7.6198\% $<$ 21.8028\%), and 3.~Susanna Cherchi takes over the vacated Sardinian seat next to 2.~Emiliano Fenu (this is precisely the sequence recorded in the official minutes); the second seat elects Valentina Barzotti (Lombardia 4 -- P01). \emph{Under A}, M5S holds the Lombardia 2 seat as an ordinary seat (see below): Todde wins Lombardia 2 and Sardinia directly and Art.~85 sends her to Lombardia 2 up front; Sardinia's single ordinary seat goes to Fenu; and the recovery ranking, in which Lombardia 2 is now \emph{used} while Sardegna's 0.6781 is unused (the dynamic exclusion blocked it), sends the first Campania seat to Sardinia (electing Cherchi there) and the second, as before, to Barzotti. Same four deputies, same four colleges; but under C two of them sit by virtue of Art.~84 and Art.~85, under A by ordinary proclamation. The ambiguity of \arth{} is absorbed, in this instance, by the recovery machinery; one constituency over, in Lombardia 2, it is not (case study 3).

\paragraph{Case study 3: Lombardia 2 --- Umberto Bossi (Lega) and Devis Dori (AVS) out, Roberto Rampi (PD) in.}
Lombardia 2 has two colleges: P01 (4 seats, college quotient $\lfloor 397{,}358/4\rfloor = 99{,}339$, computed on the admitted lists' tallies, per Section~\ref{sec:validation}) and P02 (5 seats, $\lfloor 575{,}582/5\rfloor = 115{,}116$). At \arth{} level, C assigns CR 6 seats (5 direct $+$ 1 from the M5S compensation) and M5S 0; A assigns CR 5 and M5S 1.

\emph{Centre-right, under C} (6 seats): quotient $\lfloor(326{,}768 + 168{,}427 + 89{,}586)/6\rfloor = 97{,}463$; quotients of attribution FdI $3.3527$, Lega $1.7281$, FI $0.9192$; integer parts FdI 3, Lega 1; the two residuals go to FI (0.9192) and Lega (0.7281). Final: FdI 3, \textbf{Lega 2}, FI 1. At Art.~83-\emph{bis}, Lega's college decimal parts are 0.6566 in P01 (tally 65{,}225, college quotient 99{,}339) and 0.8965 in P02 (103{,}202, quotient 115{,}116), and Lega wins the college-level residual races in \emph{both} colleges, covering its two seats. Its P01 list is: 1.~\emph{Umberto Bossi}; 2.~Simona Bordonali (elected in a district, skipped); 3.~Matteo Luigi Bianchi. \textbf{Umberto Bossi is elected.} \emph{Under A} (5 seats): quotient $\lfloor 584{,}781/5 \rfloor = 116{,}956$; FdI $2.7939$, Lega $1.4401$, FI $0.7660$; integer parts FdI 2, Lega 1; residuals to FdI (0.7939) and FI (0.7660). Final: FdI 3, \textbf{Lega 1}, FI 1. At college level Lega still wins both college races, so it is now one seat \emph{above} its constituency entitlement, and the compensation of Art.~83-\emph{bis} takes back the seat gained with the smaller decimal part, P01 (0.6566 $<$ 0.8965), handing it to the deficit list M5S in \emph{its} best unused college, P01 (0.3398 $>$ 0.3255): the seat is won ordinarily by Alessandra Todde (case study 2), and \textbf{the founder of the Lega is not elected}. The FdI and FI totals are unchanged, and elect the same persons.

\emph{Centre-left} (2 seats under both): quotient $\lfloor(172{,}356+36{,}917)/2\rfloor = 104{,}636$; PD $1.6472$, AVS $0.3528$; the sweep gives PD 2, AVS 0 in both interpretations. \emph{Under C}, PD's fourth surplus seat is returned exactly here (decimal 0.6472; Table~\ref{tab:comp}), giving PD 1 and AVS 1. At college level PD wins the residual races in both P01 (0.7224) and P02 (0.8739); being one seat above its (reduced) entitlement, it gives up the smaller-decimal college P01, and AVS receives the seat in its own best unused college (P02, where its decimal 0.1889 beats its P01 decimal 0.1528), electing the P02 list leader \emph{Devis Dori}. \emph{Under A}, PD's surplus is exhausted elsewhere and Lombardia 2 keeps PD 2, AVS 0: PD covers both colleges, its P01 list leader \emph{Roberto Rampi} is elected, and Dori is not. (Note how the statutory rule that donor and recipient colleges are selected \emph{independently} is what sends AVS to P02 rather than to the college PD vacated; an earlier revision of our implementation that gave same-college priority here elected the P01 list leader instead: a one-sentence misreading worth one seat, caught by the name-level validation of Section~\ref{sec:validation}.)

\paragraph{Case study 4: Sicilia 2 --- the within-coalition sweep is itself order-dependent.}
At \arth{} level, Sicilia 2 is identical under A and C (CR 5, CL 2, M5S 3, A--IV 1). The difference arises one level down, inside \arti{}, which under interpretation A is also executed sequentially with dynamic exclusion. The CL quotient in Sicilia 2 (2 seats) is $\lfloor(127{,}820+22{,}483)/2\rfloor = 75{,}151$: PD $1.7008$, AVS $0.2992$. \emph{Under C}, PD takes the integer-part seat and the residual (0.7008 $>$ 0.2992): PD 2, AVS 0; the compensation removes PD seats elsewhere (its Sicilian decimal 0.7008 is too large to be selected). \emph{Under A}, by the time the within-CL sweep reaches Sicilia 2 (25th constituency), PD has already reached its national total of 57: it is dynamically excluded, and the residual goes to AVS (0.2992).

At Art.~83-\emph{bis} level the consequences unfold as follows. PD's college decimal parts are 0.5353 (P01, college quotient 78{,}841), 0.4607 (P02, quotient 86{,}702) and 0.5952 (P03, quotient 76{,}729), and PD wins the college residual race in all three colleges; Forza Italia wins in P02 (0.5116) and P03 (0.4588), one more than its entitlement of 1. \emph{Under C} (PD entitlement 2), PD gives up its smallest-decimal college, P02 (0.4607), where Azione--Italia Viva receives (its best unused decimal, 0.2410, electing Giuseppe Castiglione); FI gives up P03, and Lega receives in its own best unused college, P02 (0.2738); PD keeps P01 (electing the list leader \emph{Maria Stefania Marino}) and P03. This is, name for name and college for college, the officially proclaimed outcome. \emph{Under A} (PD entitlement 1), PD must give up P01 as well: Marino is not elected, and the seat reaches the remaining deficit list AVS, which receives it in \emph{its} best unused college (P02, decimal 0.1114 against 0.0981 in P03), electing the P02 list leader \emph{Pierpaolo Montalto}. The \emph{existence} of the AVS-for-PD swap in Sicilia 2 rests on the saturation event above and is noise-proof; the college placements ride on margins of $0.013$--$0.075$, one to two orders of magnitude above the attainable perturbations (Table~\ref{tab:margins}).

\subsection{The order of the constituencies under interpretation A}
\label{sec:names-Arev}

\begin{table}[t]
\centering\small
\caption{Named differences between the two runs of interpretation A: official constituency order vs.\ reversed order. Identical votes, identical algorithm; only the processing order changes.}
\label{tab:names-Arev}
\footnotesize
\begin{tabular}{llll}
\toprule
& Deputy & List & College \\
\midrule
\multirow{6}{*}{\shortstack[l]{Elected under the\\ official order only}}
& Maria Cecilia Guerra & PD & Piemonte 1 -- P01 \\
& Catia Polidori & FI & Umbria -- P01 \\
& Aldo Mattia & FdI & Basilicata -- P01 \\
& Romina Mura & PD & Sardegna -- P01 \\
& Pierpaolo Montalto & AVS & Sicilia 2 -- P02 \\
& Diego Binelli & Lega & Trentino-A.A. -- P01 \\
\midrule
\multirow{6}{*}{\shortstack[l]{Elected under the\\ reversed order only}}
& Giovanni Ravalli & FdI & Piemonte 1 -- P02 \\
& Paolo Nicol\`o Romano & AVS & Piemonte 2 -- P02 \\
& Valeria Alessandrini & Lega & Umbria -- P01 \\
& Michele Casino & FI & Basilicata -- P01 \\
& Maria Stefania Marino & PD & Sicilia 2 -- P01 \\
& Sara Ferrari & PD & Trentino-A.A. -- P01 \\
\midrule
Displaced
& Chiara Appendino & M5S & Piemonte 2 -- P02 $\to$ Piemonte 2 -- P01 \\
\bottomrule
\end{tabular}
\end{table}

Table~\ref{tab:names-Arev} lists the 12 deputies whose presence in the Chamber depends only on the order in which the constituencies are processed, plus one who changes college. The mechanism mirrors Section~\ref{sec:names-AC}: the dynamic exclusion bites in the \emph{last} constituencies of whatever order is chosen. Under the official order the divergences are in Sardegna, Trentino and (via \arti{}) Sicilia 2, so the reversed run, which processes those constituencies \emph{first}, agrees there with C, and the six ``official-order'' names of Table~\ref{tab:names-Arev} include three of the four A-only names of Table~\ref{tab:names-AC} (Binelli, Mura, Montalto); the fourth, Rampi, happens to be elected under both orders. Under the reversed order the divergences appear at the opposite end of the list:

\begin{itemize}
\item \textbf{Lombardia 1} (25th processed): M5S is saturated (41/41) and skipped (decimal 0.5189); the residual goes to CL with 0.2467.
\item \textbf{Piemonte 2} (26th): M5S skipped again (0.7805); CL takes the seat with 0.3426.
\item \textbf{Piemonte 1} (27th): CL is now itself \emph{above} its entitlement (69/68) and skipped (0.4179); M5S skipped (0.3516); CR takes the seat with 0.2505.
\end{itemize}
After the sweep CL is at $+1$ and M5S at $+1$, CR at $-2$; the compensation (Table~\ref{tab:comp}) takes back CL's just-gained Lombardia 1 seat (0.2467, its only residual gain small enough) and M5S's Lombardia 3 seat (0.5532), both to CR. The final unit-level picture (Appendix~\ref{app:assh}): relative to C, the reversed-order run moves one seat in Piemonte 1 (CR 5 / CL 3 instead of 4/4), one in Piemonte 2 (CL 3 / M5S 0 instead of 2/1), and one in Lombardia 2 (CR 5 / M5S 1 instead of 6/0, as in the official-order run).

\paragraph{Case study 5: Piemonte --- Appendino changes college, Iaria survives by two routes, Guerra loses her seat.}
Under the official order (where A agrees with C in Piedmont): M5S holds one seat in Piemonte 1 (college P01) and one in Piemonte 2, which the Art.~83-\emph{bis} compensation places in college P02 (M5S receives on its largest unused college decimal: 0.4494 in P02 vs 0.3621 in P01). Chiara Appendino heads the M5S lists in all four Piedmontese colleges and wins in both Piemonte 1 -- P01 and Piemonte 2 -- P02; Art.~85 sends her where the list percentage is lowest (8.0679\% vs 12.96\%), i.e.\ to Piemonte 2 -- P02, and the Piemonte 1 -- P01 seat goes to the second name, \emph{Antonino Iaria}. The CL in Piemonte 1 has 4 seats: quotient $\lfloor(240{,}909+53{,}840)/4\rfloor = 73{,}687$, PD $3.2694$, AVS $0.7307$: PD 3, AVS 1 (residual). PD's three seats fill P01 twice and P02 once; on the P01 list, Debora Serracchiani (multi-candidate, proclaimed in Friuli-Venezia Giulia at 18.40\%) is skipped, electing Mauro Laus and \emph{Maria Cecilia Guerra}.

Under the reversed order: M5S loses its Piemonte 2 seat (taken by CL at letter (h), Section above). CL in Piemonte 2 now has 3 seats: quotient $\lfloor 209{,}861/3 \rfloor = 69{,}953$, PD $2.5521$, AVS $0.4479$: PD 2 by integer parts, and the residual goes to AVS, since PD is dynamically excluded by then (the within-coalition sweep also runs in reversed order, and PD reaches 57 before Piedmont is processed). AVS's new Piemonte 2 seat is placed by the college compensation on its largest unused college decimal, P02 (0.1680) over P01 (0.1429), electing the P02 list leader \emph{Paolo Nicol\`o Romano}. In Piemonte 1, CL drops to 3 seats: quotient $\lfloor 294{,}749/3 \rfloor = 98{,}249$, PD $2.4520$, AVS $0.5480$: PD 2 (one per college; \emph{Guerra loses her seat}), AVS 1. CR in Piemonte 1 rises to 5 seats: quotient $\lfloor 428{,}674/5\rfloor = 85{,}734$, FdI $3.0409$, Lega $1.0886$, FI $0.8706$: FdI 3, Lega 1, FI 1 (residual), one FdI seat more than under the official order (FdI $2.4327$, residuals to Lega 0.8708 and FI 0.6965). FdI's extra seat goes to P02, where, after Augusta Montaruli (district winner, skipped), Marco Perissa (proclaimed in P01) and Immacolata Zurzolo (elected on the list's other P02 seat), the next available name is 4.~\emph{Giovanni Ravalli}, elected. Finally, the Art.~84(4) chain of case study 2 re-routes once more: under the reversed order the largest unused M5S constituency decimal is Piemonte 2 itself (0.7805, blocked by the dynamic exclusion at the end of the reversed sweep), so the first Campania seat lands in Piemonte 2 -- P01 and identifies \emph{Appendino herself}, already elected in Piemonte 1 -- P01; the post-recovery Art.~85 check proclaims her in Piemonte 2 -- P01 (8.1702\% $<$ 12.96\%), and \emph{Iaria re-enters as her substitute} in Piemonte 1 -- P01. Iaria thus sits under both orders (as an ordinary winner in one, as an Art.~85 substitute in the other) while Appendino represents college P02 or P01 of Piemonte 2 depending on nothing but the processing order.

\paragraph{Case study 6: Umbria and Basilicata --- the compensation changes its beneficiary.}
Under both A runs, FdI ends the within-CR sweep above its national total of 69 and must return seats; but the size of its surplus differs (one seat under the official order, two under the reversed order, the second coming from the Piemonte 1 integer part gained above). Under the official order the surplus seat is returned in Umbria (FdI's smallest residual-gain decimal, 0.3577), and the receiving list is the \emph{only} deficit list at that point, FI (Umbria decimals: FI 0.3010, Lega 0.3413): FI's seat elects its Umbrian list leader \emph{Catia Polidori}. Under the reversed order both FI and Lega are in deficit; in Umbria the law gives the seat to the deficit list with the \emph{largest} unused decimal part, Lega (0.3413 $>$ 0.3010), electing \emph{Valeria Alessandrini}; the second surplus seat is returned in Basilicata (decimal 0.4976), where the remaining deficit list FI (0.2564) receives it, electing \emph{Michele Casino}; and FdI's own Basilicata seat, which under the official order elects \emph{Aldo Mattia}, disappears. In Sardegna, finally, the reversed run agrees with C at unit level (CL 2, M5S 2), so Romina Mura is not elected; and since M5S also holds Lombardia 2 under this run, Todde again leaves Sardinia (Art.~85, 7.6198\% $<$ 21.8028\%), so the two regular M5S seats go to 2.~Emiliano Fenu and 3.~\emph{Susanna Cherchi}. Cherchi thus sits in the Chamber under both orders, but under the official order as an Art.~84(4) replacement for a Campanian seat (case study 2), under the reversed order as an ordinary Sardinian proclamation: the same person, elected from the same list position, on two different legal bases.

\subsection{Interpretation B: where two seats remain unassigned}
\label{sec:names-B}

Under interpretation B (all integer parts first, then residual seats with dynamic exclusion, official order), the sweep proceeds like C until M5S reaches its national total of 41 at Campania 1. From there:
\begin{itemize}
\item \textbf{Basilicata}: M5S, though holding the largest decimal part (0.8156), is excluded; the two residual seats go to CL (0.6753) and A--IV (0.3186).
\item \textbf{Calabria}: M5S (0.5789) excluded; seats to CR (0.5424) and CL (0.5150), bringing CL to 67; CL completes its total of 68 with the Sicilia 1 residual at the following step.
\item \textbf{Sicilia 2} (2 residual seats): the first goes to CR (0.8613). For the second, \emph{every} admitted unit is unavailable: A--IV saturated (0.6212), M5S saturated (0.4793), CL saturated (0.0382), and CR has already used its decimal part in this constituency. \textbf{The seat cannot be assigned.}
\item \textbf{Sardegna}: same situation, with CR taking the first residual (0.9611); for the second, M5S (0.6781), A--IV (0.3547) and CL (0.0062) are saturated. \textbf{The seat cannot be assigned.}
\item \textbf{Trentino-Alto Adige}: CL saturated (0.8562) is skipped; SVP (0.7649) and CR (0.0124) take the two seats.
\end{itemize}
The sweep ends with CR at 112 of 114 and \emph{no} unit above its total: there is nothing the compensation procedure can transfer, and within this mechanical execution of the text the two seats remain unassigned: absent the kind of extra-textual suppletive rule discussed in Section~\ref{sec:results}, the Chamber would convene with 398 of 400 members. The deficit propagates into \arti{} (FI $-1$, Lega $-1$) and below; the text provides no rule for any of this. For completeness, Table~\ref{tab:names-B} reports the named outcome of our mechanical continuation of the pipeline (which elects 390 of the 391 modelled deputies and, in the Sicilian college compensation, is forced into one further inconsistency: a Partito Democratico college seat in Sicilia 2 that exceeds the list's constituency entitlement and cannot be compensated away, so that the realised list totals of this completed B are Forza Italia $-1$, Lega $-1$ and PD $+1$ with respect to Table~\ref{tab:apportionment}). These names are indicative (unlike Sections~\ref{sec:names-AC} and~\ref{sec:names-Arev}, they depend on choices the law does not define), but the two lost seats, their location, and the impossibility of repair are exact consequences of interpretation B.

\begin{table}[t]
\centering\small
\caption{Named differences between interpretation B and interpretation C (official order). The B column reflects one mechanical completion of the pipeline on B's deficient seat totals (see text); B elects one deputy fewer overall (390 vs 391).}
\label{tab:names-B}
\footnotesize
\begin{tabular}{p{6.9cm}p{6.9cm}}
\toprule
Elected under C, \emph{not} under B (12) & Elected under B, \emph{not} under C (11) \\
\midrule
Umberto Bossi (Lega, Lombardia 2 -- P01);
Giulio Centemero (Lega, Lombardia 3 -- P01);
Grazia Di Maggio (FdI, Lombardia 1 -- P02);
Devis Dori (AVS, Lombardia 2 -- P02);
Valentina Barzotti (M5S, Lombardia 4 -- P01, via Art.~84(4));
Luana Zanella (AVS, Veneto 2 -- P02);
Francesco Mari (AVS, Campania 2 -- P02);
Elisa Scutell\`a (M5S, Calabria -- P01, via Art.~84(3));
Giovanna Iacono (PD, Sicilia 1 -- P02);
Giuseppe Castiglione (A--IV, Sicilia 2 -- P02);
Catia Polidori (FI, Umbria -- P01);
Sara Ferrari (PD, Trentino-A.A. -- P01)
&
Roberto Rampi (PD, Lombardia 2 -- P01);
Sara Montrasio (M5S, Lombardia 1 -- P02);
Samuel Sorial (M5S, Lombardia 3 -- P02);
Alessia Rotta (PD, Veneto 2 -- P03);
Rosa D'Amelio (PD, Campania 2 -- P02);
Mario Polese (A--IV, Basilicata -- P01);
Filippo Sestito (AVS, Calabria -- P01);
Marilena Grassadonia (AVS, Sicilia 1 -- P01);
Pierpaolo Montalto (AVS, Sicilia 2 -- P02);
Marco Squarta (FdI, Umbria -- P01);
Diego Binelli (Lega, Trentino-A.A. -- P01) \\
\midrule
\multicolumn{2}{p{14.2cm}}{Displaced: Giuseppe Conte (M5S, Lombardia 1 -- P02 $\to$ P01)} \\
\bottomrule
\end{tabular}
\end{table}

Several of the B-vs-C differences have transparent mechanisms. Because B never creates surpluses, the three M5S residual seats in Lombardia 1, 2 and 3 that C's compensation hands to the centre-right \emph{stay with M5S} (electing Montrasio and Sorial, and turning Todde's Lombardia 2 seat into an ordinary one), and the centre-right loses one seat in each Lombard constituency (Di Maggio, Bossi, Centemero). M5S's exclusion in Basilicata and Calabria hands one seat to A--IV (Polese) and, by removing the Calabrian exhausted-list case that under C is resolved through Art.~84(3), unseats Elisa Scutell\`a, the M5S deputy whose seat was the object of the real 2025 judicial recount; the freed CL seat in Calabria elects AVS's list leader Sestito. With M5S saturated earlier, the Art.~84 chain shortens by one seat (Barzotti out). The Trentino and Sicilia 2 endgames replicate the A-run mechanics (Binelli in, Ferrari out; Montalto in, at the expense of the A--IV placement that elects Castiglione under C).

\subsection{Robustness of the named results}
\label{sec:names-robust}

The input data agree with the official national tallies within $\pm0.06\%$ (Section~\ref{sec:validation}, Level 2), and the decimal parts computed from them deviate from the official ones by up to $\pm 0.001$ (Level 4), deviations that, on other votes, could flip the territorial placement of marginal seats \emph{within} a fixed interpretation. A reviewer of the named results of this section is therefore entitled to ask whether the differences \emph{between} interpretations (the deputies of Tables~\ref{tab:names-AC}--\ref{tab:names-B}) could be artefacts of such noise. They cannot, for a structural reason and a quantitative one, and we verified both.

\paragraph{Structure: the divergences are anchored to integer events.} A within-interpretation marginal seat is decided by a single close comparison of decimal parts. The A-vs-C divergences are not: each originates in a \emph{saturation event}, a unit's running total reaching an integer entitlement (M5S reaching $43 > 41$ before Sardegna; CL reaching $68$ before Trentino; PD reaching $57$ before Sicilia 2), and integer totals move only when some upstream decimal race flips. The named chains then run through pairwise races whose margins are large. Table~\ref{tab:margins} lists the decisive pairwise comparisons appearing in the case studies of Sections~\ref{sec:names-AC}--\ref{sec:names-Arev}, together with a conservative bound on how far the perturbation model below can move each decimal part: every margin exceeds its bound by a factor between 7 and 181.

\begin{table}[t]
\centering\small
\caption{Decisive pairwise comparisons in the named chains of Sections~\ref{sec:names-AC}--\ref{sec:names-Arev}: margin between the winning and losing decimal parts, and a conservative bound on the largest decimal-part shift attainable under the $\pm0.06\%$ perturbation model (bound $= (a_w+a_l)(4\varepsilon + 1/q)$ with $\varepsilon = 6\cdot10^{-4}$, $a$ the quotients of attribution, $q$ the relevant electoral quotient). Saturation-type events (a unit excluded because an integer entitlement is reached) have no decimal margin and are covered by the Monte Carlo analysis.}
\label{tab:margins}
\begin{tabular}{lrrr}
\toprule
Comparison & Margin & Bound & Ratio \\
\midrule
(h) comp.\ C: 3rd M5S return, Lombardia 2 vs Sardegna & 0.0433 & 0.0056 & 8$\times$ \\
(i) CR Lombardia 2 (A): residual, FI vs Lega & 0.3259 & 0.0053 & 61$\times$ \\
(i) CR Lombardia 2 (C): residual, Lega vs FdI & 0.3754 & 0.0122 & 31$\times$ \\
(i) CR Trentino-A.A.\ (A): residual, Lega vs FI & 0.3379 & 0.0019 & 181$\times$ \\
(i) CL Sardegna (A): residual, AVS vs PD & 0.2742 & 0.0073 & 38$\times$ \\
(i) CL Sicilia 2 (C): residual, PD vs AVS & 0.4016 & 0.0048 & 83$\times$ \\
(i) comp.\ C: 4th PD return, Lombardia 2 vs Sicilia 2 & 0.0536 & 0.0081 & 7$\times$ \\
(i) comp.\ A$_{\mathrm{rev}}$ receiver in Umbria: Lega vs FI & 0.0403 & 0.0015 & 26$\times$ \\
(i) CR Piemonte 1 (A$_{\mathrm{rev}}$): residual, FI vs Lega & 0.7820 & 0.0047 & 166$\times$ \\
83-\emph{bis} comp.\ (A): Lega Lombardia 2 cedes P01 vs P02 & 0.2399 & 0.0039 & 61$\times$ \\
83-\emph{bis} comp.\ (A): M5S Lombardia 2 receives P01 vs P02 & 0.0143 & 0.0016 & 9$\times$ \\
83-\emph{bis} comp.\ (C): PD Sicilia 2 cedes P02 vs P01 & 0.0746 & 0.0024 & 31$\times$ \\
83-\emph{bis} comp.\ (A): AVS Sicilia 2 receives P02 vs P03 & 0.0133 & 0.0005 & 27$\times$ \\
Art.~84(4) chain (C): Lombardia 2 vs Lombardia 4 & 0.0663 & 0.0029 & 23$\times$ \\
Art.~84(4) chain (A): Sardegna vs Lombardia 4 & 0.1096 & 0.0030 & 37$\times$ \\
Art.~85, Todde: Lombardia 2 vs Sardegna (percentage points) & 14.18 & $\sim$0.02 & $\gg$100$\times$ \\
\bottomrule
\end{tabular}
\end{table}

\paragraph{Monte Carlo perturbation.} We perturb every list tally in every multi-member college by two independent multiplicative factors, one per list (a national bias) and one per list and constituency (a territorial bias), rebuild all derived tallies (constituency, college totals, percentage tallies), and re-run the pipeline from Phase~2 to Phase~5 for both A and C on each perturbed input, comparing the resulting A-vs-C named difference set with the baseline set of Table~\ref{tab:names-AC}. District winners and candidate-level votes are left untouched: their reconstruction from the data is exact (Section~\ref{sec:data}) and no divergence chain involves them; the perturbations do, however, propagate to every quantity downstream of the list tallies (including the percentage tallies that drive Art.~85 and the decimal parts that drive Art.~84), so the multi-candidacy and recovery chains are fully exercised on each draw. Four noise models of increasing severity are used: factors drawn uniformly in $\pm0.06\%$ (the observed national deviation of the open data); factors at the extremes $\pm0.06\%$ (the most adversarial signs at the observed magnitude); and extreme factors at $\pm0.12\%$ and $\pm0.30\%$, i.e.\ two and five times the observed deviation, to account for the possibility that the true data error concentrates territorially, and is adversarial in sign, rather than spreading evenly as symmetric statistical scatter. By design these models therefore span more than idealised uniform i.i.d.\ noise, probing \emph{systematic} data bias rather than only random error. We stress the epistemic status of this design: it is a \emph{stress test}, not a calibrated probabilistic model of the certified data-generating process: the true residual error of the open data is unknown beyond its aggregate magnitude, so we probe families of deviations that dominate it in size and in adversarialness of sign, and draw only worst-case conclusions (of the form ``the difference set survives every draw at the observed noise level''), never distributional ones. One hundred draws per model suffice for conclusions of that form; we make no finer inferential claim. Table~\ref{tab:mc} reports the results (100 draws per model). In every draw of every model, the set of admitted units and the national apportionment of Table~\ref{tab:apportionment} are unchanged: the thresholds of Art.~83(1) are never approached at these noise levels.

\begin{table}[t]
\centering\small
\caption{Monte Carlo robustness of the A-vs-C named differences (100 draws per noise model). ``Set identical'': draws in which the A-vs-C named difference set equals the baseline of Table~\ref{tab:names-AC} exactly. ``Min.\ persistence'': the per-name minimum, over the 8 baseline differences, of the fraction of draws in which that difference persists. ``Extra diffs'': mean number of non-baseline differences per draw. The last two columns measure the \emph{power} of the test: how often, and how much, the perturbation changes the composition of the C parliament itself relative to unperturbed C.}
\label{tab:mc}
\footnotesize
\begin{tabular}{lrrrrr}
\toprule
Noise model & Set identical & Min.\ persistence & Extra diffs & C changed & Names changed in C \\
\midrule
uniform $\pm0.06\%$ & 100/100 & 100\% & 0.00 & 0/100 & 0.0 (max 0) \\
extreme $\pm0.06\%$ & 100/100 & 100\% & 0.00 & 4/100 & 0.1 (max 2) \\
extreme $\pm0.12\%$ & 86/100 & 86\% & 1.26 & 33/100 & 1.9 (max 13) \\
extreme $\pm0.30\%$ & 47/100 & 50\% & 3.85 & 62/100 & 4.6 (max 19) \\
\bottomrule
\end{tabular}
\end{table}

Three facts stand out. First, at the noise level actually exhibited by the data ($\pm0.06\%$, whether uniform or adversarial in sign), \emph{all eight named differences persist in every draw, with no additions}, and the national apportionment of Table~\ref{tab:apportionment} never changes; the test has power already at this level, since the adversarial-sign model does flip names \emph{within} the C parliament in 4\% of draws (up to 2 names, the class of behaviour underlying the residual official-data mismatches), without ever touching the difference set. Second, doubling the noise scrambles the C parliament more substantially (33\% of draws, up to 13 names), yet the A-vs-C difference set survives intact in 86\% of draws and each individual difference in at least 86\%. Third, at five times the observed noise the composition of the C parliament changes in 62\% of draws (up to 19 names), while the difference set still persists intact in 47\% of draws and each individual difference in at least 50\%: the differences between interpretations are markedly \emph{more} stable than the marginal seats within an interpretation, as the anchoring argument predicts. We conclude that the identities in Tables~\ref{tab:names-AC} and~\ref{tab:names-Arev} are not artefacts of input noise; for Table~\ref{tab:names-B} the same holds for the seat-level facts (which units lose seats and where), while the B name list additionally depends on the undefined downstream behaviour already discussed in Section~\ref{sec:names-B}.

\subsection{Summary}

Across the three experiments, 31 distinct persons enter or leave the Chamber, and 2 more change the college they represent, without a single vote changing, and without any change in the party totals, except for the two seats that interpretation B leaves unassigned. Every one of these changes traces back, through a handful of exactly reproducible divisions, to the single ambiguous sentence of \arth{} identified by Crafa.

\section{Discussion and limitations}
\label{sec:discussion}

\paragraph{What the experiment establishes.} The three interpretations identified by Crafa are not merely abstractly distinct: on the actual votes of the 2022 election they produce materially different parliaments, and the differences have names and faces (Section~\ref{sec:names}). At the same time, the compensation machinery of the current law is effective at protecting party-level proportionality: all completed variants agree exactly on every party's seat count. The normative implication is the one drawn in~\cite{crafa2023}: the text should be amended to prescribe one algorithm unambiguously (the minimal amendment proposed there corresponds to Algorithm~C, which our experiments show to be order-independent on real data), and the compensation procedure should be fully specified, ideally in algorithmic form.

\paragraph{An inventory of text--practice gaps.} Beyond the central ambiguity of \arth{}, the reconstruction surfaced every point at which the executed procedure \emph{specifies} something the enacted text leaves open or says differently; we collect them here because they are findings in their own right, all of the same kind as Crafa's. (1)~\emph{\arth{}, exclusion clause}: the most immediate sequential reading (Algorithm~A) is not what is executed; practice computes the exclusion once, statically (Algorithm~C), the subject of this paper. (2)~\emph{Art.~83-bis(1), college quotient}: the text divides ``the sum of the college tallies of \emph{all} the lists''; the decimal parts recorded in the official minutes prove that the sum is taken over the admitted lists only (Section~\ref{sec:validation}, Level~4). (3)~\emph{Art.~84 recoveries}: the text does not say whether the candidate receiving a recovered seat may be someone already elected elsewhere, nor that each decimal part supports a single recovered seat; practice identifies the candidate regardless, resolves the resulting multiple election afterwards under Art.~85, and moves down the two-tier (unused-first) ranking one seat per decimal part. (4)~\emph{Arts.~83-bis--85, seat and person placement}: which college a multiply-listed candidate vacates is decided by percentage tallies that, in Toscana, the returning office itself computed in two mutually inconsistent ways within a single \emph{verbale} (Gori vs.\ Michelotti); the AVS multiple-election chain that decides between Paglia and Piccolotti originates one step earlier, in an Art.~83-\emph{bis} college attribution resting on a very small decimal margin. These are our two residual disagreements, and both are, since July~2025, under formal investigation by the Chamber's Committee on Elections, together with two further seats raising questions of the same family~\cite{giunta2025}. (5)~\emph{Historically}, the stranded seats produced by the Mattarella-law procedure in 1994--2001 were themselves recovered by a practice, not by the text~\cite[p.~66]{camera2019}. None of these gaps is visible in aggregate seat totals; every one of them moved, or could move, specific named seats; and none of them is resolvable by reading the statute alone, which is precisely the argument for writing the algorithm into the law; Crafa's published analysis takes exactly this step, proposing a concrete draft amendment of \arth{}~\cite[App.~2]{crafa2024}.

\paragraph{The arbitrariness is not marginal.} Consider \emph{how} the sequential interpretation allocates its final seats. In the Sardinian endgame of Section~\ref{sec:names-AC}, a seat is assigned to a decimal part of 0.0062 while a decimal part of 0.6781 is passed over; in Trentino, 0.0124 defeats 0.8562. The law's own selection criterion (largest decimal parts) is inverted by the interplay of the exclusion clause with the processing order, an interplay the legislator manifestly did not design, since the text does not even fix the order. The same mechanism, run in the opposite order, produces the same kind of inversion at the other end of the country (0.2467 and 0.2505 winning in Lombardy and Piedmont). Which of Guerra and Ravalli, Polidori and Alessandrini, Mattia and Casino, Mura and Marino sits in the Chamber is decided by nothing in the voters' expression and nothing in the law's stated criteria.

\paragraph{Fidelity limits.} Three caveats bound the fidelity claim of Section~\ref{sec:validation}. (i) The municipal open-data file is near-definitive, not certified-definitive: the decimal parts computed from it deviate from those recorded in the official minutes by up to $\pm0.001$. On the 2022 data this noise never crosses a decision margin in the transcribed minutes, and both residual name disagreements lie in multiple-election chains whose handling is itself under parliamentary investigation (Section~\ref{sec:validation}); on different votes, however, marginal placements within an interpretation could flip. The variant \emph{comparisons} of Sections~\ref{sec:results} and~\ref{sec:names} are, however, robust to this caveat: all variants run on identical inputs, the divergence mechanisms are anchored to integer-valued saturation events rather than to close decimal races, and the Monte Carlo analysis of Section~\ref{sec:names-robust} shows that the full set of named differences survives perturbations at, and well beyond, the observed noise level. (ii) The Aosta Valley and overseas constituencies are out of scope (9 seats, elected under different rules, not interacting with the pipeline under study). (iii) Paragraphs 5 and 7 of Art.~84 and the statutory draws are implemented as logged stubs, since no 2022 case exercises them; a different election could require completing them.

\paragraph{Robustness of the headline numbers.} The counts of Table~\ref{tab:variants} are specific to the 2022 votes, but no longer to a single permutation: the order study of Section~\ref{sec:results} brackets the order effect of A (0--13 entering and 0--14 leaving against practice; 161 distinct outcomes over 200 random orders, 560 over 1{,}000, and the order-contingent set closing at 140 persons over a near-exhaustive 15 million, with stable summary statistics; 29\% of orders incomplete and a further 9\% with altered party totals), the reversal experiment falling near the middle of that distribution, and the Monte Carlo analysis of Section~\ref{sec:names-robust} certifies the named A-vs-C differences against input noise. The qualitative findings (A order-dependent and lossy under a third of the sampled orders, C order-independent (Proposition~\ref{prop:C}), B lossy under the documented order and under nearly half of the sampled orders, party totals invariant under the documented executions) are each mechanistically explained by the structure of the respective algorithm.

\paragraph{What is established, and how strongly.} We state the epistemic status of the individual claims. (i)~\emph{Established computationally, beyond the reach of input noise}: the existence and mechanism of the ambiguity; the exact national apportionment; the unit- and list-level divergences among A, B and C and their named consequences, and the stranding and party-migration counts of the order sweep (Sections~\ref{sec:results}--\ref{sec:names}, robustness in Section~\ref{sec:names-robust}). (ii)~\emph{Established on the modelled data, but sensitive in principle to input revisions}: the identity of marginal seats \emph{within} a fixed interpretation: the two residual disagreements of Section~\ref{sec:validation}, and the within-C flips seen in the Monte Carlo power analysis of Section~\ref{sec:names-robust}, are of this kind; the B name list of Table~\ref{tab:names-B} shares this weaker status for the additional reason discussed there. (iii)~\emph{Informed argument, not demonstration}: the legal-interpretive standing of the three algorithms (Section~\ref{sec:legal-scope}) and the constitutional significance of the phenomenon, on which we defer to the legal literature.

\section{Reproducibility}
\label{sec:repro}

Every number in this article is either printed in it (Tables~\ref{tab:apportionment}--\ref{tab:mc} and Appendices~\ref{app:letth}--\ref{app:i}) or derivable from those by the statutory procedure described in Sections~\ref{sec:law}--\ref{sec:impl}; in particular, the appendix tables were generated programmatically from the implementation's intermediate state, not transcribed by hand. The implementation (\texttt{rosatellum.py}, $\sim$1{,}170 lines of documented Python), the scripts and per-order/per-person outputs of the order studies of Section~\ref{sec:results} (200 and 1{,}000 random orders, per-deputy election frequencies with colleges, completeness and party-total scan, and the factored near-exhaustive sweep of 15 million orders with its consistency test) and the Monte Carlo scripts of Section~\ref{sec:names-robust}, the derived input table of per-college seats, the outputs of all runs (\texttt{eletti\_v2\_A/B/C.csv}), the official-proclamation dataset extracted from \texttt{dati.camera.it}, the per-MP comparison table and the full audit logs are publicly archived on Zenodo at \texttt{doi:10.5281/zenodo.21219963}, together with the OCR transcriptions of the UECN minutes used for the Level-4 validation (development repository: \texttt{github.com/coppolapaolo/one-vote-several-parliaments}). A full run of the pipeline (all phases, one variant) takes under a minute on commodity hardware; Phase~1 (the municipal-level aggregation) dominates and can be shared across variant runs. The random seed only affects statutory draws, none of which is triggered by the 2022 data.

The exact versions of the inputs and of the code are identified by the following SHA-256 digests; the Ministry of the Interior files were retrieved from the Eligendo open-data platform~\cite{eligendo} in 2023, and the official proclamations from \texttt{dati.camera.it}~\cite{daticamera} via SPARQL in 2026 (mandates of the XIX legislature with start date October 2022).

\begin{center}\scriptsize
\begin{tabular}{ll}
\toprule
File & SHA-256 \\
\midrule
\texttt{Camera\_Italia\_LivComune.txt} & \texttt{9f4f49dd0dfde3d1ea32ccda9665d6d230959f97ce074418ef2553242aa675b3} \\
\texttt{CAMERA\_ITALIA\_20220925\_uni.csv} & \texttt{c0fea844a6bf1b10896aae80351b5ffeb637ef918cb96e223bfcad783b5dfcf3} \\
\texttt{CAMERA\_ITALIA\_20220925\_pluri.csv} & \texttt{0e8b5c3af90c0f0fcf4f68421b7ac176eb570a1b9b75e123f86c77e5ce02c95e} \\
\texttt{spettanti\_circoscrizioni.csv} & \texttt{de7dd1c4d96984b7f364540a83cdf94a06353631295ac2b7cd5c6c45a8379ce4} \\
\texttt{spettanti\_collegi\_plurinominali.csv} & \texttt{f4a55356cfc3fdf18f0da18efbbaf89413efd3fd54d873a89a209be1d93ebd1a} \\
\texttt{eletti\_ufficiali\_2022.csv} & \texttt{f4b3a84769bf294b3f7b257401f42f12b3c71a78d9f954e19544d786bb826b94} \\
\texttt{rosatellum.py} & \texttt{6ce07712d09f95d4676dd3d4c45f4866e804a4b38412dd08413a67530a5316c6} \\
\bottomrule
\end{tabular}
\end{center}

\section{Conclusion}
\label{sec:conclusion}

We turned Crafa's algorithmic reading of the Italian electoral law into an executable, validated model, and ran the algorithmically admissible interpretations of the text on the real votes of the 2022 general election. The ambiguity of \arth{} is real and quantitatively significant at the level that matters for representation (who sits in parliament and for which territory): thirty-one people enter or leave the Chamber across the tested readings of one sentence of the law, among them the founder of one of the governing parties, and the differences are robust to input noise well beyond the residual uncertainty of the data. Under the executions documented in practice the ambiguity is invisible at the level of party totals, which the compensation procedure protects, yet the Mattarella-style reading strands two seats that the text provides no way to assign, and the most immediate sequential reading, run under random processing orders, strands seats in 29\% of cases and in a further 9\% silently moves one between parties; across the 1{,}000 sampled orders, the seats of 119 named individuals are in play. A parliament of the Italian Republic elected under the current text is, in a precise technical sense, one of several parliaments compatible with the same votes; which one, depends on implementation choices that the law does not make. This is also the general lesson of the exercise for computational law: legislative ambiguity is not only a matter of contested meanings, but of unspecified execution, and executing the text is how one finds it, and how one measures it. The fix is known and cheap: write the algorithm into the law. Our implementation shows that the operational interpretation (Algorithm~C) is a sound candidate (order-independent on real data and exactly compatible with the official 2022 outcome), and provides a reference against which any legislative reformulation can be tested, vote by vote and name by name.

\subsection*{Acknowledgements}
The author thanks Silvia Crafa, whose analysis of the electoral law motivated this work.

\subsection*{Use of generative AI and AI-assisted technologies}
In preparing this work the author made extensive and pervasive use of generative AI (large language model--based coding and writing assistants), employed interactively as a working tool throughout the research: in developing and debugging the software that implements the electoral pipeline (Section~\ref{sec:impl}), in extracting and processing the input data (Section~\ref{sec:data}), in cross-checking intermediate results, and in drafting and revising the text of the manuscript. The tools were used under continuous human direction and supervision: the author set every objective, questioned, corrected and iterated on all outputs, made all substantive decisions, and independently verified every quantitative result against the official public records (Section~\ref{sec:validation}). All code and derived data are openly archived and independently re-executable (Section~\ref{sec:repro}). The author takes full responsibility for the entire content of the manuscript, including any part produced with the assistance of these tools.

\appendix

\section{The complete Art.~83(1)(h) computation}
\label{app:letth}

Table~\ref{tab:letth-full} publishes, for each of the 27 constituencies in the official order used throughout the paper, the full inputs and intermediate values of the territorial distribution: the proportional seats $s_c$, the sum of the constituency tallies of the admitted units, the constituency electoral quotient $q_c$ (truncated), the number of residual seats $r_c$, and each unit's tally, quotient of attribution, integer part and decimal part. These values are shared by all interpretations; A, B and C differ only in how the $r_c$ residual seats are assigned (Algorithms~\ref{alg:A}--\ref{alg:C}). Summing the integer parts gives 190 seats (CR 100, CL 54, M5S 30, A--IV 6, SVP 0) and $\sum_c r_c = 55$ residual seats.

% tabella generata automaticamente da genera_tabelle.py
\begin{small}
\begin{longtable}{@{}llrrr@{\hspace{1.2em}}lrrrr@{}}
\caption{Art.~83(1)(h), complete computation for all 27 constituencies: proportional
seats $s_c$, total constituency tallies of the admitted units, constituency electoral
quotient $q_c=\lfloor\text{total}/s_c\rfloor$, number of residual seats $r_c$ (seats minus
sum of integer parts), and, for each unit, its constituency tally, quotient of attribution
$\mathit{cifra}/q_c$, integer part and decimal part. Units: CR = centre-right coalition,
CL = centre-left coalition, M5S = Movimento 5 Stelle, A--IV = Azione--Italia Viva,
SVP = SVP--PATT.}
\label{tab:letth-full}\\
\toprule
Constituency & $s_c$ & Total tallies & $q_c$ & $r_c$ & Unit & Tally & Quotient & Int. & Dec. \\
\midrule
\endfirsthead
\multicolumn{10}{l}{\small\emph{Table \ref{tab:letth-full} (continued)}}\\
\toprule
Constituency & $s_c$ & Total tallies & $q_c$ & $r_c$ & Unit & Tally & Quotient & Int. & Dec. \\
\midrule
\endhead
\bottomrule
\endlastfoot
\multirow{4}{*}{Piemonte 1} & \multirow{4}{*}{10} & \multirow{4}{*}{1\,008\,523} & \multirow{4}{*}{100\,852} & \multirow{4}{*}{2} & CR & 428\,674 & 4.2505 & 4 & 0.2505 \\
 & & & & & CL & 344\,705 & 3.4179 & 3 & 0.4179 \\
 & & & & & M5S & 136\,313 & 1.3516 & 1 & 0.3516 \\
 & & & & & A--IV & 98\,831 & 0.9800 & 0 & 0.9800 \\
\midrule
\multirow{4}{*}{Piemonte 2} & \multirow{4}{*}{9} & \multirow{4}{*}{944\,029} & \multirow{4}{*}{104\,892} & \multirow{4}{*}{2} & CR & 529\,113 & 5.0444 & 5 & 0.0444 \\
 & & & & & CL & 245\,720 & 2.3426 & 2 & 0.3426 \\
 & & & & & M5S & 81\,873 & 0.7805 & 0 & 0.7805 \\
 & & & & & A--IV & 87\,323 & 0.8325 & 0 & 0.8325 \\
\midrule
\multirow{4}{*}{Lombardia 1} & \multirow{4}{*}{16} & \multirow{4}{*}{1\,874\,918} & \multirow{4}{*}{117\,182} & \multirow{4}{*}{1} & CR & 846\,248 & 7.2217 & 7 & 0.2217 \\
 & & & & & CL & 614\,824 & 5.2467 & 5 & 0.2467 \\
 & & & & & M5S & 177\,990 & 1.5189 & 1 & 0.5189 \\
 & & & & & A--IV & 235\,856 & 2.0127 & 2 & 0.0127 \\
\midrule
\multirow{4}{*}{Lombardia 2} & \multirow{4}{*}{9} & \multirow{4}{*}{1\,009\,841} & \multirow{4}{*}{112\,204} & \multirow{4}{*}{2} & CR & 584\,781 & 5.2118 & 5 & 0.2118 \\
 & & & & & CL & 246\,174 & 2.1940 & 2 & 0.1940 \\
 & & & & & M5S & 71\,227 & 0.6348 & 0 & 0.6348 \\
 & & & & & A--IV & 107\,659 & 0.9595 & 0 & 0.9595 \\
\midrule
\multirow{4}{*}{Lombardia 3} & \multirow{4}{*}{9} & \multirow{4}{*}{1\,098\,730} & \multirow{4}{*}{122\,081} & \multirow{4}{*}{2} & CR & 649\,167 & 5.3175 & 5 & 0.3175 \\
 & & & & & CL & 269\,024 & 2.2037 & 2 & 0.2037 \\
 & & & & & M5S & 67\,539 & 0.5532 & 0 & 0.5532 \\
 & & & & & A--IV & 113\,000 & 0.9256 & 0 & 0.9256 \\
\midrule
\multirow{4}{*}{Lombardia 4} & \multirow{4}{*}{7} & \multirow{4}{*}{764\,979} & \multirow{4}{*}{109\,282} & \multirow{4}{*}{3} & CR & 433\,909 & 3.9705 & 3 & 0.9705 \\
 & & & & & CL & 203\,079 & 1.8583 & 1 & 0.8583 \\
 & & & & & M5S & 62\,129 & 0.5685 & 0 & 0.5685 \\
 & & & & & A--IV & 65\,862 & 0.6027 & 0 & 0.6027 \\
\midrule
\multirow{4}{*}{Veneto 1} & \multirow{4}{*}{8} & \multirow{4}{*}{878\,318} & \multirow{4}{*}{109\,789} & \multirow{4}{*}{2} & CR & 512\,664 & 4.6695 & 4 & 0.6695 \\
 & & & & & CL & 228\,674 & 2.0829 & 2 & 0.0828 \\
 & & & & & M5S & 58\,584 & 0.5336 & 0 & 0.5336 \\
 & & & & & A--IV & 78\,396 & 0.7141 & 0 & 0.7141 \\
\midrule
\multirow{4}{*}{Veneto 2} & \multirow{4}{*}{12} & \multirow{4}{*}{1\,410\,216} & \multirow{4}{*}{117\,518} & \multirow{4}{*}{2} & CR & 849\,166 & 7.2258 & 7 & 0.2258 \\
 & & & & & CL & 340\,991 & 2.9016 & 2 & 0.9016 \\
 & & & & & M5S & 87\,735 & 0.7466 & 0 & 0.7466 \\
 & & & & & A--IV & 132\,324 & 1.1260 & 1 & 0.1260 \\
\midrule
\multirow{4}{*}{Friuli-Venezia Giulia} & \multirow{4}{*}{5} & \multirow{4}{*}{534\,142} & \multirow{4}{*}{106\,828} & \multirow{4}{*}{2} & CR & 289\,799 & 2.7128 & 2 & 0.7128 \\
 & & & & & CL & 150\,144 & 1.4055 & 1 & 0.4055 \\
 & & & & & M5S & 42\,575 & 0.3985 & 0 & 0.3985 \\
 & & & & & A--IV & 51\,624 & 0.4832 & 0 & 0.4832 \\
\midrule
\multirow{4}{*}{Liguria} & \multirow{4}{*}{6} & \multirow{4}{*}{664\,731} & \multirow{4}{*}{110\,788} & \multirow{4}{*}{2} & CR & 293\,569 & 2.6498 & 2 & 0.6498 \\
 & & & & & CL & 223\,395 & 2.0164 & 2 & 0.0164 \\
 & & & & & M5S & 93\,605 & 0.8449 & 0 & 0.8449 \\
 & & & & & A--IV & 54\,162 & 0.4889 & 0 & 0.4889 \\
\midrule
\multirow{4}{*}{Emilia-Romagna} & \multirow{4}{*}{18} & \multirow{4}{*}{2\,132\,108} & \multirow{4}{*}{118\,450} & \multirow{4}{*}{3} & CR & 885\,124 & 7.4726 & 7 & 0.4726 \\
 & & & & & CL & 820\,983 & 6.9311 & 6 & 0.9311 \\
 & & & & & M5S & 228\,663 & 1.9305 & 1 & 0.9305 \\
 & & & & & A--IV & 197\,338 & 1.6660 & 1 & 0.6660 \\
\midrule
\multirow{4}{*}{Toscana} & \multirow{4}{*}{15} & \multirow{4}{*}{1\,743\,757} & \multirow{4}{*}{116\,250} & \multirow{4}{*}{2} & CR & 715\,363 & 6.1537 & 6 & 0.1537 \\
 & & & & & CL & 642\,631 & 5.5280 & 5 & 0.5280 \\
 & & & & & M5S & 209\,240 & 1.7999 & 1 & 0.7999 \\
 & & & & & A--IV & 176\,523 & 1.5185 & 1 & 0.5185 \\
\midrule
\multirow{4}{*}{Umbria} & \multirow{4}{*}{4} & \multirow{4}{*}{404\,364} & \multirow{4}{*}{101\,091} & \multirow{4}{*}{2} & CR & 197\,922 & 1.9579 & 1 & 0.9579 \\
 & & & & & CL & 115\,646 & 1.1440 & 1 & 0.1440 \\
 & & & & & M5S & 55\,195 & 0.5460 & 0 & 0.5460 \\
 & & & & & A--IV & 35\,601 & 0.3522 & 0 & 0.3522 \\
\midrule
\multirow{4}{*}{Marche} & \multirow{4}{*}{6} & \multirow{4}{*}{693\,952} & \multirow{4}{*}{115\,658} & \multirow{4}{*}{3} & CR & 334\,208 & 2.8896 & 2 & 0.8896 \\
 & & & & & CL & 199\,721 & 1.7268 & 1 & 0.7268 \\
 & & & & & M5S & 103\,594 & 0.8957 & 0 & 0.8957 \\
 & & & & & A--IV & 56\,429 & 0.4879 & 0 & 0.4879 \\
\midrule
\multirow{4}{*}{Lazio 1} & \multirow{4}{*}{15} & \multirow{4}{*}{1\,688\,628} & \multirow{4}{*}{112\,575} & \multirow{4}{*}{2} & CR & 723\,572 & 6.4275 & 6 & 0.4275 \\
 & & & & & CL & 529\,154 & 4.7005 & 4 & 0.7005 \\
 & & & & & M5S & 266\,132 & 2.3640 & 2 & 0.3640 \\
 & & & & & A--IV & 169\,770 & 1.5081 & 1 & 0.5081 \\
\midrule
\multirow{4}{*}{Lazio 2} & \multirow{4}{*}{7} & \multirow{4}{*}{853\,494} & \multirow{4}{*}{121\,927} & \multirow{4}{*}{2} & CR & 477\,928 & 3.9198 & 3 & 0.9198 \\
 & & & & & CL & 179\,007 & 1.4681 & 1 & 0.4681 \\
 & & & & & M5S & 140\,163 & 1.1496 & 1 & 0.1496 \\
 & & & & & A--IV & 56\,396 & 0.4625 & 0 & 0.4625 \\
\midrule
\multirow{4}{*}{Abruzzo} & \multirow{4}{*}{6} & \multirow{4}{*}{582\,526} & \multirow{4}{*}{97\,087} & \multirow{4}{*}{1} & CR & 294\,410 & 3.0324 & 3 & 0.0324 \\
 & & & & & CL & 133\,349 & 1.3735 & 1 & 0.3735 \\
 & & & & & M5S & 115\,456 & 1.1892 & 1 & 0.1892 \\
 & & & & & A--IV & 39\,311 & 0.4049 & 0 & 0.4049 \\
\midrule
\multirow{4}{*}{Molise} & \multirow{4}{*}{1} & \multirow{4}{*}{120\,368} & \multirow{4}{*}{120\,368} & \multirow{4}{*}{1} & CR & 53\,375 & 0.4434 & 0 & 0.4434 \\
 & & & & & CL & 29\,302 & 0.2434 & 0 & 0.2434 \\
 & & & & & M5S & 31\,441 & 0.2612 & 0 & 0.2612 \\
 & & & & & A--IV & 6\,250 & 0.0519 & 0 & 0.0519 \\
\midrule
\multirow{4}{*}{Campania 1} & \multirow{4}{*}{13} & \multirow{4}{*}{1\,082\,296} & \multirow{4}{*}{83\,253} & \multirow{4}{*}{3} & CR & 307\,088 & 3.6886 & 3 & 0.6886 \\
 & & & & & CL & 228\,721 & 2.7473 & 2 & 0.7473 \\
 & & & & & M5S & 483\,699 & 5.8100 & 5 & 0.8100 \\
 & & & & & A--IV & 62\,788 & 0.7542 & 0 & 0.7542 \\
\midrule
\multirow{4}{*}{Campania 2} & \multirow{4}{*}{11} & \multirow{4}{*}{1\,031\,957} & \multirow{4}{*}{93\,814} & \multirow{4}{*}{2} & CR & 420\,362 & 4.4808 & 4 & 0.4808 \\
 & & & & & CL & 241\,228 & 2.5713 & 2 & 0.5713 \\
 & & & & & M5S & 312\,979 & 3.3362 & 3 & 0.3362 \\
 & & & & & A--IV & 57\,388 & 0.6117 & 0 & 0.6117 \\
\midrule
\multirow{4}{*}{Puglia} & \multirow{4}{*}{17} & \multirow{4}{*}{1\,653\,114} & \multirow{4}{*}{97\,242} & \multirow{4}{*}{2} & CR & 702\,128 & 7.2204 & 7 & 0.2204 \\
 & & & & & CL & 379\,678 & 3.9045 & 3 & 0.9045 \\
 & & & & & M5S & 487\,348 & 5.0117 & 5 & 0.0117 \\
 & & & & & A--IV & 83\,960 & 0.8634 & 0 & 0.8634 \\
\midrule
\multirow{4}{*}{Basilicata} & \multirow{4}{*}{3} & \multirow{4}{*}{224\,799} & \multirow{4}{*}{74\,933} & \multirow{4}{*}{2} & CR & 89\,216 & 1.1906 & 1 & 0.1906 \\
 & & & & & CL & 50\,599 & 0.6753 & 0 & 0.6753 \\
 & & & & & M5S & 61\,114 & 0.8156 & 0 & 0.8156 \\
 & & & & & A--IV & 23\,870 & 0.3186 & 0 & 0.3186 \\
\midrule
\multirow{4}{*}{Calabria} & \multirow{4}{*}{8} & \multirow{4}{*}{655\,750} & \multirow{4}{*}{81\,968} & \multirow{4}{*}{2} & CR & 290\,365 & 3.5424 & 3 & 0.5424 \\
 & & & & & CL & 124\,185 & 1.5150 & 1 & 0.5150 \\
 & & & & & M5S & 211\,390 & 2.5789 & 2 & 0.5789 \\
 & & & & & A--IV & 29\,810 & 0.3637 & 0 & 0.3637 \\
\midrule
\multirow{4}{*}{Sicilia 1} & \multirow{4}{*}{9} & \multirow{4}{*}{814\,168} & \multirow{4}{*}{90\,463} & \multirow{4}{*}{2} & CR & 320\,661 & 3.5447 & 3 & 0.5447 \\
 & & & & & CL & 151\,713 & 1.6771 & 1 & 0.6771 \\
 & & & & & M5S & 288\,531 & 3.1895 & 3 & 0.1895 \\
 & & & & & A--IV & 53\,263 & 0.5888 & 0 & 0.5888 \\
\midrule
\multirow{4}{*}{Sicilia 2} & \multirow{4}{*}{11} & \multirow{4}{*}{908\,228} & \multirow{4}{*}{82\,566} & \multirow{4}{*}{2} & CR & 401\,381 & 4.8613 & 4 & 0.8613 \\
 & & & & & CL & 168\,282 & 2.0382 & 2 & 0.0382 \\
 & & & & & M5S & 287\,274 & 3.4793 & 3 & 0.4793 \\
 & & & & & A--IV & 51\,291 & 0.6212 & 0 & 0.6212 \\
\midrule
\multirow{4}{*}{Sardegna} & \multirow{4}{*}{7} & \multirow{4}{*}{623\,528} & \multirow{4}{*}{89\,075} & \multirow{4}{*}{2} & CR & 263\,756 & 2.9611 & 2 & 0.9611 \\
 & & & & & CL & 178\,702 & 2.0062 & 2 & 0.0062 \\
 & & & & & M5S & 149\,477 & 1.6781 & 1 & 0.6781 \\
 & & & & & A--IV & 31\,593 & 0.3547 & 0 & 0.3547 \\
\midrule
\multirow{5}{*}{Trentino-A.A.} & \multirow{5}{*}{3} & \multirow{5}{*}{458\,947} & \multirow{5}{*}{152\,982} & \multirow{5}{*}{2} & CR & 154\,877 & 1.0124 & 1 & 0.0124 \\
 & & & & & CL & 130\,988 & 0.8562 & 0 & 0.8562 \\
 & & & & & M5S & 25\,394 & 0.1660 & 0 & 0.1660 \\
 & & & & & A--IV & 30\,678 & 0.2005 & 0 & 0.2005 \\
 & & & & & SVP & 117\,010 & 0.7649 & 0 & 0.7649 \\
\end{longtable}
\end{small}

\section{Final unit-level assignments of the four runs}
\label{app:assh}

Table~\ref{tab:assh} reports, for each constituency and unit, the final seats after residual assignment \emph{and} compensation (Table~\ref{tab:comp}) under the four runs: C and A with the official order, A with the reversed order, B with the official order. Column sums per constituency equal $s_c$ except under B in Sicilia 2 and Sardegna (one seat lost in each); row sums per unit equal the national entitlements 114/68/41/21/1 except CR $=112$ under B.

% tabella generata automaticamente da genera_tabelle.py
\begin{small}
\begin{longtable}{@{}llr@{\hspace{1.4em}}rrrr@{}}
\caption{Final seat assignments of Art.~83(1)(h) (after residual seats and statutory
compensation) for each constituency and unit, under the four runs: C (official order),
A (official order), A (reversed order), B (official order). ``Int.''\ is the
order-independent integer-part assignment of Table~\ref{tab:letth-full}. Rows marked
$^{*}$ differ across runs.}
\label{tab:assh}\\
\toprule
Constituency & Unit & Int. & C & A & A$_{\mathrm{rev}}$ & B \\
\midrule
\endfirsthead
\multicolumn{7}{l}{\small\emph{Table \ref{tab:assh} (continued)}}\\
\toprule
Constituency & Unit & Int. & C & A & A$_{\mathrm{rev}}$ & B \\
\midrule
\endhead
\bottomrule
\endlastfoot
\multirow{4}{*}{Piemonte 1} & CR$^{*}$ & 4 & 4 & 4 & 5 & 4 \\
 & CL$^{*}$ & 3 & 4 & 4 & 3 & 4 \\
 & M5S & 1 & 1 & 1 & 1 & 1 \\
 & A--IV & 0 & 1 & 1 & 1 & 1 \\
\midrule
\multirow{4}{*}{Piemonte 2} & CR & 5 & 5 & 5 & 5 & 5 \\
 & CL$^{*}$ & 2 & 2 & 2 & 3 & 2 \\
 & M5S$^{*}$ & 0 & 1 & 1 & 0 & 1 \\
 & A--IV & 0 & 1 & 1 & 1 & 1 \\
\midrule
\multirow{4}{*}{Lombardia 1} & CR$^{*}$ & 7 & 8 & 8 & 8 & 7 \\
 & CL & 5 & 5 & 5 & 5 & 5 \\
 & M5S$^{*}$ & 1 & 1 & 1 & 1 & 2 \\
 & A--IV & 2 & 2 & 2 & 2 & 2 \\
\midrule
\multirow{4}{*}{Lombardia 2} & CR$^{*}$ & 5 & 6 & 5 & 5 & 5 \\
 & CL & 2 & 2 & 2 & 2 & 2 \\
 & M5S$^{*}$ & 0 & 0 & 1 & 1 & 1 \\
 & A--IV & 0 & 1 & 1 & 1 & 1 \\
\midrule
\multirow{4}{*}{Lombardia 3} & CR$^{*}$ & 5 & 6 & 6 & 6 & 5 \\
 & CL & 2 & 2 & 2 & 2 & 2 \\
 & M5S$^{*}$ & 0 & 0 & 0 & 0 & 1 \\
 & A--IV & 0 & 1 & 1 & 1 & 1 \\
\midrule
\multirow{4}{*}{Lombardia 4} & CR & 3 & 4 & 4 & 4 & 4 \\
 & CL & 1 & 2 & 2 & 2 & 2 \\
 & M5S & 0 & 0 & 0 & 0 & 0 \\
 & A--IV & 0 & 1 & 1 & 1 & 1 \\
\midrule
\multirow{4}{*}{Veneto 1} & CR & 4 & 5 & 5 & 5 & 5 \\
 & CL & 2 & 2 & 2 & 2 & 2 \\
 & M5S & 0 & 0 & 0 & 0 & 0 \\
 & A--IV & 0 & 1 & 1 & 1 & 1 \\
\midrule
\multirow{4}{*}{Veneto 2} & CR & 7 & 7 & 7 & 7 & 7 \\
 & CL & 2 & 3 & 3 & 3 & 3 \\
 & M5S & 0 & 1 & 1 & 1 & 1 \\
 & A--IV & 1 & 1 & 1 & 1 & 1 \\
\midrule
\multirow{4}{*}{Friuli-Venezia Giulia} & CR & 2 & 3 & 3 & 3 & 3 \\
 & CL & 1 & 1 & 1 & 1 & 1 \\
 & M5S & 0 & 0 & 0 & 0 & 0 \\
 & A--IV & 0 & 1 & 1 & 1 & 1 \\
\midrule
\multirow{4}{*}{Liguria} & CR & 2 & 3 & 3 & 3 & 3 \\
 & CL & 2 & 2 & 2 & 2 & 2 \\
 & M5S & 0 & 1 & 1 & 1 & 1 \\
 & A--IV & 0 & 0 & 0 & 0 & 0 \\
\midrule
\multirow{4}{*}{Emilia-Romagna} & CR & 7 & 7 & 7 & 7 & 7 \\
 & CL & 6 & 7 & 7 & 7 & 7 \\
 & M5S & 1 & 2 & 2 & 2 & 2 \\
 & A--IV & 1 & 2 & 2 & 2 & 2 \\
\midrule
\multirow{4}{*}{Toscana} & CR & 6 & 6 & 6 & 6 & 6 \\
 & CL & 5 & 6 & 6 & 6 & 6 \\
 & M5S & 1 & 2 & 2 & 2 & 2 \\
 & A--IV & 1 & 1 & 1 & 1 & 1 \\
\midrule
\multirow{4}{*}{Umbria} & CR & 1 & 2 & 2 & 2 & 2 \\
 & CL & 1 & 1 & 1 & 1 & 1 \\
 & M5S & 0 & 1 & 1 & 1 & 1 \\
 & A--IV & 0 & 0 & 0 & 0 & 0 \\
\midrule
\multirow{4}{*}{Marche} & CR & 2 & 3 & 3 & 3 & 3 \\
 & CL & 1 & 2 & 2 & 2 & 2 \\
 & M5S & 0 & 1 & 1 & 1 & 1 \\
 & A--IV & 0 & 0 & 0 & 0 & 0 \\
\midrule
\multirow{4}{*}{Lazio 1} & CR & 6 & 6 & 6 & 6 & 6 \\
 & CL & 4 & 5 & 5 & 5 & 5 \\
 & M5S & 2 & 2 & 2 & 2 & 2 \\
 & A--IV & 1 & 2 & 2 & 2 & 2 \\
\midrule
\multirow{4}{*}{Lazio 2} & CR & 3 & 4 & 4 & 4 & 4 \\
 & CL & 1 & 2 & 2 & 2 & 2 \\
 & M5S & 1 & 1 & 1 & 1 & 1 \\
 & A--IV & 0 & 0 & 0 & 0 & 0 \\
\midrule
\multirow{4}{*}{Abruzzo} & CR & 3 & 3 & 3 & 3 & 3 \\
 & CL & 1 & 1 & 1 & 1 & 1 \\
 & M5S & 1 & 1 & 1 & 1 & 1 \\
 & A--IV & 0 & 1 & 1 & 1 & 1 \\
\midrule
\multirow{4}{*}{Molise} & CR & 0 & 1 & 1 & 1 & 1 \\
 & CL & 0 & 0 & 0 & 0 & 0 \\
 & M5S & 0 & 0 & 0 & 0 & 0 \\
 & A--IV & 0 & 0 & 0 & 0 & 0 \\
\midrule
\multirow{4}{*}{Campania 1} & CR & 3 & 3 & 3 & 3 & 3 \\
 & CL & 2 & 3 & 3 & 3 & 3 \\
 & M5S & 5 & 6 & 6 & 6 & 6 \\
 & A--IV & 0 & 1 & 1 & 1 & 1 \\
\midrule
\multirow{4}{*}{Campania 2} & CR & 4 & 4 & 4 & 4 & 4 \\
 & CL & 2 & 3 & 3 & 3 & 3 \\
 & M5S & 3 & 3 & 3 & 3 & 3 \\
 & A--IV & 0 & 1 & 1 & 1 & 1 \\
\midrule
\multirow{4}{*}{Puglia} & CR & 7 & 7 & 7 & 7 & 7 \\
 & CL & 3 & 4 & 4 & 4 & 4 \\
 & M5S & 5 & 5 & 5 & 5 & 5 \\
 & A--IV & 0 & 1 & 1 & 1 & 1 \\
\midrule
\multirow{4}{*}{Basilicata} & CR & 1 & 1 & 1 & 1 & 1 \\
 & CL & 0 & 1 & 1 & 1 & 1 \\
 & M5S$^{*}$ & 0 & 1 & 1 & 1 & 0 \\
 & A--IV$^{*}$ & 0 & 0 & 0 & 0 & 1 \\
\midrule
\multirow{4}{*}{Calabria} & CR & 3 & 4 & 4 & 4 & 4 \\
 & CL$^{*}$ & 1 & 1 & 1 & 1 & 2 \\
 & M5S$^{*}$ & 2 & 3 & 3 & 3 & 2 \\
 & A--IV & 0 & 0 & 0 & 0 & 0 \\
\midrule
\multirow{4}{*}{Sicilia 1} & CR & 3 & 3 & 3 & 3 & 3 \\
 & CL & 1 & 2 & 2 & 2 & 2 \\
 & M5S & 3 & 3 & 3 & 3 & 3 \\
 & A--IV & 0 & 1 & 1 & 1 & 1 \\
\midrule
\multirow{4}{*}{Sicilia 2} & CR & 4 & 5 & 5 & 5 & 5 \\
 & CL & 2 & 2 & 2 & 2 & 2 \\
 & M5S & 3 & 3 & 3 & 3 & 3 \\
 & A--IV$^{*}$ & 0 & 1 & 1 & 1 & 0 \\
\midrule
\multirow{4}{*}{Sardegna} & CR & 2 & 3 & 3 & 3 & 3 \\
 & CL$^{*}$ & 2 & 2 & 3 & 2 & 2 \\
 & M5S$^{*}$ & 1 & 2 & 1 & 2 & 1 \\
 & A--IV & 0 & 0 & 0 & 0 & 0 \\
\midrule
\multirow{5}{*}{Trentino-A.A.} & CR$^{*}$ & 1 & 1 & 2 & 1 & 2 \\
 & CL$^{*}$ & 0 & 1 & 0 & 1 & 0 \\
 & M5S & 0 & 0 & 0 & 0 & 0 \\
 & A--IV & 0 & 0 & 0 & 0 & 0 \\
 & SVP & 0 & 1 & 1 & 1 & 1 \\
\end{longtable}
\end{small}

\section{The within-coalition distribution (Art.~83(1)(i))}
\label{app:i}

Table~\ref{tab:icifre} publishes the constituency tallies of the five above-threshold coalition lists, the inputs of \arti{}. For any run, the reader can re-execute the within-coalition distribution of a coalition in a constituency by taking the coalition's seat count $s$ from Table~\ref{tab:assh}, computing the quotient $\lfloor(\text{sum of its lists' tallies there})/s\rfloor$, assigning integer parts and residual seats per the interpretation under test, and applying the compensation transfers of Table~\ref{tab:comp}. Table~\ref{tab:assi} gives the resulting final per-list assignments of all four runs for verification.

% tabella generata automaticamente da genera_tabelle.py
\begin{small}
\begin{longtable}{@{}lrrr@{\hspace{1.6em}}rr@{}}
\caption{Constituency tallies (\emph{cifre elettorali circoscrizionali}) of the
above-threshold lists of the two coalitions, used by Art.~83(1)(i). Together with the
per-run unit seats of Table~\ref{tab:assh}, these are the complete inputs of the
within-coalition distribution: for a coalition with $s$ seats in a constituency, the
quotient is the truncated ratio of the sum of its listed tallies to $s$.}
\label{tab:icifre}\\
\toprule
 & \multicolumn{3}{c}{Centre-right} & \multicolumn{2}{c}{Centre-left} \\
\cmidrule(lr){2-4}\cmidrule(l){5-6}
Constituency & FdI & Lega & FI & PD & AVS \\
\midrule
\endfirsthead
\multicolumn{6}{l}{\small\emph{Table \ref{tab:icifre} (continued)}}\\
\toprule
Constituency & FdI & Lega & FI & PD & AVS \\
\midrule
\endhead
\bottomrule
\endlastfoot
Piemonte 1 & 260\,708 & 93\,326 & 74\,640 & 240\,909 & 53\,840 \\
Piemonte 2 & 306\,918 & 131\,067 & 91\,128 & 178\,530 & 31\,331 \\
Lombardia 1 & 498\,152 & 199\,816 & 148\,280 & 436\,481 & 93\,213 \\
Lombardia 2 & 326\,768 & 168\,427 & 89\,586 & 172\,356 & 36\,917 \\
Lombardia 3 & 366\,787 & 190\,804 & 91\,576 & 195\,963 & 39\,145 \\
Lombardia 4 & 251\,984 & 112\,767 & 69\,158 & 157\,094 & 23\,664 \\
Veneto 1 & 311\,663 & 140\,204 & 60\,797 & 163\,298 & 34\,530 \\
Veneto 2 & 509\,920 & 224\,986 & 114\,260 & 245\,703 & 48\,896 \\
Friuli-Venezia Giulia & 185\,394 & 64\,806 & 39\,599 & 108\,870 & 21\,986 \\
Liguria & 178\,239 & 68\,186 & 47\,144 & 166\,543 & 32\,131 \\
Emilia-Romagna & 577\,234 & 173\,508 & 134\,382 & 648\,373 & 99\,744 \\
Toscana & 487\,339 & 123\,321 & 104\,703 & 495\,637 & 92\,592 \\
Umbria & 134\,357 & 33\,776 & 29\,789 & 91\,052 & 15\,422 \\
Marche & 222\,249 & 60\,379 & 51\,580 & 155\,263 & 25\,348 \\
Lazio 1 & 539\,925 & 89\,832 & 93\,815 & 388\,019 & 79\,145 \\
Lazio 2 & 305\,414 & 80\,663 & 91\,851 & 135\,236 & 25\,459 \\
Abruzzo & 174\,548 & 50\,374 & 69\,488 & 104\,047 & 16\,946 \\
Molise & 27\,632 & 11\,042 & 14\,701 & 23\,444 & 3\,815 \\
Campania 1 & 161\,849 & 33\,432 & 111\,807 & 168\,654 & 35\,867 \\
Campania 2 & 239\,915 & 68\,402 & 112\,045 & 191\,661 & 27\,549 \\
Puglia & 410\,718 & 91\,808 & 199\,602 & 293\,310 & 52\,881 \\
Basilicata & 44\,396 & 21\,946 & 22\,874 & 37\,147 & 8\,348 \\
Calabria & 136\,479 & 41\,673 & 112\,213 & 103\,315 & 13\,050 \\
Sicilia 1 & 167\,857 & 44\,265 & 108\,539 & 114\,682 & 19\,768 \\
Sicilia 2 & 221\,779 & 59\,451 & 120\,151 & 127\,820 & 22\,483 \\
Sardegna & 161\,971 & 43\,117 & 58\,668 & 128\,596 & 34\,675 \\
Trentino-A.A. & 94\,823 & 43\,109 & 16\,945 & 86\,459 & 29\,599 \\
\end{longtable}
\end{small}

% tabella generata automaticamente da genera_tabelle.py
\begin{small}
\begin{longtable}{@{}ll@{\hspace{1.4em}}rrrr@{}}
\caption{Final within-coalition seat assignments of Art.~83(1)(i) (after residual seats
and statutory compensation) for the five coalition lists, under the four runs. Rows
marked $^{*}$ differ across runs. Single lists (M5S, Azione--Italia Viva, SVP--PATT) take
their unit seats of Table~\ref{tab:assh} directly and are omitted.}
\label{tab:assi}\\
\toprule
Constituency & List & C & A & A$_{\mathrm{rev}}$ & B \\
\midrule
\endfirsthead
\multicolumn{6}{l}{\small\emph{Table \ref{tab:assi} (continued)}}\\
\toprule
Constituency & List & C & A & A$_{\mathrm{rev}}$ & B \\
\midrule
\endhead
\bottomrule
\endlastfoot
\multirow{5}{*}{Piemonte 1} & FdI$^{*}$ & 2 & 2 & 3 & 2 \\
 & Lega & 1 & 1 & 1 & 1 \\
 & FI & 1 & 1 & 1 & 1 \\
 & PD$^{*}$ & 3 & 3 & 2 & 3 \\
 & AVS & 1 & 1 & 1 & 1 \\
\midrule
\multirow{5}{*}{Piemonte 2} & FdI & 3 & 3 & 3 & 3 \\
 & Lega & 1 & 1 & 1 & 1 \\
 & FI & 1 & 1 & 1 & 1 \\
 & PD & 2 & 2 & 2 & 2 \\
 & AVS$^{*}$ & 0 & 0 & 1 & 0 \\
\midrule
\multirow{5}{*}{Lombardia 1} & FdI$^{*}$ & 5 & 5 & 5 & 4 \\
 & Lega & 2 & 2 & 2 & 2 \\
 & FI & 1 & 1 & 1 & 1 \\
 & PD & 4 & 4 & 4 & 4 \\
 & AVS & 1 & 1 & 1 & 1 \\
\midrule
\multirow{5}{*}{Lombardia 2} & FdI & 3 & 3 & 3 & 3 \\
 & Lega$^{*}$ & 2 & 1 & 1 & 1 \\
 & FI & 1 & 1 & 1 & 1 \\
 & PD$^{*}$ & 1 & 2 & 2 & 2 \\
 & AVS$^{*}$ & 1 & 0 & 0 & 0 \\
\midrule
\multirow{4}{*}{Lombardia 3} & FdI & 3 & 3 & 3 & 3 \\
 & Lega$^{*}$ & 2 & 2 & 2 & 1 \\
 & FI & 1 & 1 & 1 & 1 \\
 & PD & 2 & 2 & 2 & 2 \\
\midrule
\multirow{4}{*}{Lombardia 4} & FdI & 2 & 2 & 2 & 2 \\
 & Lega & 1 & 1 & 1 & 1 \\
 & FI & 1 & 1 & 1 & 1 \\
 & PD & 2 & 2 & 2 & 2 \\
\midrule
\multirow{4}{*}{Veneto 1} & FdI & 3 & 3 & 3 & 3 \\
 & Lega & 1 & 1 & 1 & 1 \\
 & FI & 1 & 1 & 1 & 1 \\
 & PD & 2 & 2 & 2 & 2 \\
\midrule
\multirow{5}{*}{Veneto 2} & FdI & 4 & 4 & 4 & 4 \\
 & Lega & 2 & 2 & 2 & 2 \\
 & FI & 1 & 1 & 1 & 1 \\
 & PD$^{*}$ & 2 & 2 & 2 & 3 \\
 & AVS$^{*}$ & 1 & 1 & 1 & 0 \\
\midrule
\multirow{3}{*}{Friuli-Venezia Giulia} & FdI & 2 & 2 & 2 & 2 \\
 & Lega & 1 & 1 & 1 & 1 \\
 & PD & 1 & 1 & 1 & 1 \\
\midrule
\multirow{3}{*}{Liguria} & FdI & 2 & 2 & 2 & 2 \\
 & Lega & 1 & 1 & 1 & 1 \\
 & PD & 2 & 2 & 2 & 2 \\
\midrule
\multirow{5}{*}{Emilia-Romagna} & FdI & 5 & 5 & 5 & 5 \\
 & Lega & 1 & 1 & 1 & 1 \\
 & FI & 1 & 1 & 1 & 1 \\
 & PD & 6 & 6 & 6 & 6 \\
 & AVS & 1 & 1 & 1 & 1 \\
\midrule
\multirow{5}{*}{Toscana} & FdI & 4 & 4 & 4 & 4 \\
 & Lega & 1 & 1 & 1 & 1 \\
 & FI & 1 & 1 & 1 & 1 \\
 & PD & 5 & 5 & 5 & 5 \\
 & AVS & 1 & 1 & 1 & 1 \\
\midrule
\multirow{4}{*}{Umbria} & FdI$^{*}$ & 1 & 1 & 1 & 2 \\
 & Lega$^{*}$ & 0 & 0 & 1 & 0 \\
 & FI$^{*}$ & 1 & 1 & 0 & 0 \\
 & PD & 1 & 1 & 1 & 1 \\
\midrule
\multirow{3}{*}{Marche} & FdI & 2 & 2 & 2 & 2 \\
 & Lega & 1 & 1 & 1 & 1 \\
 & PD & 2 & 2 & 2 & 2 \\
\midrule
\multirow{5}{*}{Lazio 1} & FdI & 4 & 4 & 4 & 4 \\
 & Lega & 1 & 1 & 1 & 1 \\
 & FI & 1 & 1 & 1 & 1 \\
 & PD & 4 & 4 & 4 & 4 \\
 & AVS & 1 & 1 & 1 & 1 \\
\midrule
\multirow{4}{*}{Lazio 2} & FdI & 2 & 2 & 2 & 2 \\
 & Lega & 1 & 1 & 1 & 1 \\
 & FI & 1 & 1 & 1 & 1 \\
 & PD & 2 & 2 & 2 & 2 \\
\midrule
\multirow{3}{*}{Abruzzo} & FdI & 2 & 2 & 2 & 2 \\
 & FI & 1 & 1 & 1 & 1 \\
 & PD & 1 & 1 & 1 & 1 \\
\midrule
\multirow{1}{*}{Molise} & FdI & 1 & 1 & 1 & 1 \\
\midrule
\multirow{4}{*}{Campania 1} & FdI & 2 & 2 & 2 & 2 \\
 & FI & 1 & 1 & 1 & 1 \\
 & PD & 2 & 2 & 2 & 2 \\
 & AVS & 1 & 1 & 1 & 1 \\
\midrule
\multirow{5}{*}{Campania 2} & FdI & 2 & 2 & 2 & 2 \\
 & Lega & 1 & 1 & 1 & 1 \\
 & FI & 1 & 1 & 1 & 1 \\
 & PD$^{*}$ & 2 & 2 & 2 & 3 \\
 & AVS$^{*}$ & 1 & 1 & 1 & 0 \\
\midrule
\multirow{5}{*}{Puglia} & FdI & 4 & 4 & 4 & 4 \\
 & Lega & 1 & 1 & 1 & 1 \\
 & FI & 2 & 2 & 2 & 2 \\
 & PD & 3 & 3 & 3 & 3 \\
 & AVS & 1 & 1 & 1 & 1 \\
\midrule
\multirow{3}{*}{Basilicata} & FdI$^{*}$ & 1 & 1 & 0 & 1 \\
 & FI$^{*}$ & 0 & 0 & 1 & 0 \\
 & PD & 1 & 1 & 1 & 1 \\
\midrule
\multirow{5}{*}{Calabria} & FdI & 2 & 2 & 2 & 2 \\
 & Lega & 1 & 1 & 1 & 1 \\
 & FI & 1 & 1 & 1 & 1 \\
 & PD & 1 & 1 & 1 & 1 \\
 & AVS$^{*}$ & 0 & 0 & 0 & 1 \\
\midrule
\multirow{4}{*}{Sicilia 1} & FdI & 2 & 2 & 2 & 2 \\
 & FI & 1 & 1 & 1 & 1 \\
 & PD$^{*}$ & 2 & 2 & 2 & 1 \\
 & AVS$^{*}$ & 0 & 0 & 0 & 1 \\
\midrule
\multirow{5}{*}{Sicilia 2} & FdI & 3 & 3 & 3 & 3 \\
 & Lega & 1 & 1 & 1 & 1 \\
 & FI & 1 & 1 & 1 & 1 \\
 & PD$^{*}$ & 2 & 1 & 2 & 1 \\
 & AVS$^{*}$ & 0 & 1 & 0 & 1 \\
\midrule
\multirow{4}{*}{Sardegna} & FdI & 2 & 2 & 2 & 2 \\
 & FI & 1 & 1 & 1 & 1 \\
 & PD$^{*}$ & 1 & 2 & 1 & 1 \\
 & AVS & 1 & 1 & 1 & 1 \\
\midrule
\multirow{3}{*}{Trentino-A.A.} & FdI & 1 & 1 & 1 & 1 \\
 & Lega$^{*}$ & 0 & 1 & 0 & 1 \\
 & PD$^{*}$ & 1 & 0 & 1 & 0 \\
\end{longtable}
\end{small}

\section{Per-deputy election frequencies under random orders}
\label{app:perm}

Table~\ref{tab:perm-deputati} reports the per-person outcome of the order experiment of Section~\ref{sec:results}: for each of the 119 persons elected by Algorithm~A under some but not all of the 1{,}000 uniformly random constituency orders, the number of orders under which they are elected, together with their membership in the practice parliament (interpretation~C) and in the run of A under the reference order. The other 339 modelled deputies are elected under every sampled order.

% tabella generata automaticamente da genera_tab_perm.py
\begin{small}
\begin{longtable}{@{}llrrcc@{}}
\caption{Per-deputy election frequencies under Algorithm A over the
1{,}000 uniformly random constituency orders of Section~\ref{sec:results}:
the 119 persons elected under some sampled orders but not others, by
decreasing frequency. The remaining 339 modelled deputies (including all
146 single-member winners) are elected under every sampled order and are
not listed. ``C'': member of the practice parliament (interpretation C,
which is order-independent); ``A\textsubscript{ref}'': elected by
Algorithm A under the reference order. Frequencies are counts over the
sampled orders, not probabilities (see the sampling caveats in
Section~\ref{sec:results}); the archived per-person table also lists the
colleges of election with their frequencies (Section~\ref{sec:repro}).}
\label{tab:perm-deputati}\\
\toprule
Deputy & List & Constituency & $n/1000$ & C & A\textsubscript{ref} \\
\midrule
\endfirsthead
\multicolumn{6}{@{}l}{\small\emph{Table~\ref{tab:perm-deputati}, continued.}}\\
\toprule
Deputy & List & Constituency & $n/1000$ & C & A\textsubscript{ref} \\
\midrule
\endhead
\bottomrule
\endlastfoot
Alessia Ambrosi & FdI & Trentino-Alto Adige & 999 & $\bullet$ & $\bullet$ \\
Antonino Iaria & M5S & Piemonte 1 & 999 & $\bullet$ & $\bullet$ \\
Arnaldo Lomuti & M5S & Basilicata & 999 & $\bullet$ & $\bullet$ \\
Gilberto Pichetto Fratin & FI & Piemonte 1 & 998 & $\bullet$ & $\bullet$ \\
Enrico Cappelletti & M5S & Veneto 2 & 997 & $\bullet$ & $\bullet$ \\
Luana Zanella & AVS & Veneto 2 & 996 & $\bullet$ & $\bullet$ \\
Mauro Antonio Donato Laus & PD & Piemonte 1 & 996 & $\bullet$ & $\bullet$ \\
Andrea Rossi & PD & Emilia-Romagna & 994 & $\bullet$ & $\bullet$ \\
Marco Simiani & PD & Toscana & 994 & $\bullet$ & $\bullet$ \\
Matteo Orfini & PD & Lazio 2 & 991 & $\bullet$ & $\bullet$ \\
Lorenzo Guerini & PD & Lombardia 4 & 987 & $\bullet$ & $\bullet$ \\
Roberto Giachetti & A--IV & Lazio 1 & 986 & $\bullet$ & $\bullet$ \\
Susanna Cherchi & M5S & Sardegna & 986 & $\bullet$ & $\bullet$ \\
Irene Manzi & PD & Marche & 984 & $\bullet$ & $\bullet$ \\
Naike Gruppioni & A--IV & Emilia-Romagna & 983 & $\bullet$ & $\bullet$ \\
Luigi Marattin & A--IV & Piemonte 2 & 982 & $\bullet$ & $\bullet$ \\
Daniela Ruffino & A--IV & Piemonte 1 & 980 & $\bullet$ & $\bullet$ \\
Fabrizio Benzoni & A--IV & Lombardia 3 & 978 & $\bullet$ & $\bullet$ \\
Valentina Grippo & A--IV & Veneto 1 & 977 & $\bullet$ & $\bullet$ \\
Isabella De Monte & A--IV & Friuli-Venezia Giulia & 975 & $\bullet$ & $\bullet$ \\
Elisabetta Christiana Lancellotta & FdI & Molise & 963 & $\bullet$ & $\bullet$ \\
Maria Chiara Gadda & A--IV & Lombardia 2 & 961 & $\bullet$ & $\bullet$ \\
Nicola Stumpo & PD & Calabria & 958 & $\bullet$ & $\bullet$ \\
Maurizio Leo & FdI & Sicilia 2 & 956 & $\bullet$ & $\bullet$ \\
Federico Fornaro & PD & Piemonte 2 & 953 & $\bullet$ & $\bullet$ \\
Giulio Cesare Sottanelli & A--IV & Abruzzo & 941 & $\bullet$ & $\bullet$ \\
Vinicio Giuseppe Guido Peluffo & PD & Lombardia 3 & 936 & $\bullet$ & $\bullet$ \\
Vincenzo Amendola & PD & Basilicata & 933 & $\bullet$ & $\bullet$ \\
Andrea Casu & PD & Lazio 1 & 931 & $\bullet$ & $\bullet$ \\
Marco Furfaro & PD & Toscana & 927 & $\bullet$ & $\bullet$ \\
Valentina Ghio & PD & Liguria & 925 & $\bullet$ & $\bullet$ \\
Aldo Mattia & FdI & Basilicata & 922 & $\bullet$ & $\bullet$ \\
Giovanna Iacono & PD & Sicilia 1 & 922 & $\bullet$ & $\bullet$ \\
Piero Franco Rodolfo Fassino & PD & Veneto 1 & 922 & $\bullet$ & $\bullet$ \\
Luciano D'Alfonso & PD & Abruzzo & 917 & $\bullet$ & $\bullet$ \\
Maria Stefania Marino & PD & Sicilia 2 & 917 & $\bullet$ &  \\
Augusto Curti & PD & Marche & 916 & $\bullet$ & $\bullet$ \\
Grazia Di Maggio & FdI & Lombardia 1 & 914 & $\bullet$ & $\bullet$ \\
Emma Pavanelli & M5S & Umbria & 911 & $\bullet$ & $\bullet$ \\
Andrea Quartini & M5S & Toscana & 909 & $\bullet$ & $\bullet$ \\
Sara Ferrari & PD & Trentino-Alto Adige & 909 & $\bullet$ &  \\
Antonella Forattini & PD & Lombardia 4 & 907 & $\bullet$ & $\bullet$ \\
Francesca Ghirra & AVS & Sardegna & 873 & $\bullet$ & $\bullet$ \\
Maria Cecilia Guerra & PD & Piemonte 1 & 873 & $\bullet$ & $\bullet$ \\
Elisa Scutellà & M5S & Calabria & 833 & $\bullet$ & $\bullet$ \\
Valentina Barzotti & M5S & Lombardia 4 & 758 & $\bullet$ & $\bullet$ \\
Roberto Rampi & PD & Lombardia 2 & 707 &  & $\bullet$ \\
Giulio Centemero & Lega & Lombardia 3 & 705 & $\bullet$ & $\bullet$ \\
Francesco Mari & AVS & Campania 2 & 618 & $\bullet$ & $\bullet$ \\
Paolo Pulciani & FdI & Lazio 2 & 585 & $\bullet$ & $\bullet$ \\
Marco Squarta & FdI & Umbria & 399 &  &  \\
Umberto Bossi & Lega & Lombardia 2 & 386 & $\bullet$ &  \\
Rosa D'Amelio & PD & Campania 2 & 382 &  &  \\
Devis Dori & AVS & Lombardia 2 & 307 & $\bullet$ &  \\
Valeria Alessandrini & Lega & Umbria & 301 &  &  \\
Catia Polidori & FI & Umbria & 300 & $\bullet$ & $\bullet$ \\
Samuel Sorial & M5S & Lombardia 3 & 295 &  &  \\
Elisabetta Piccolotti & AVS & Puglia & 167 &  &  \\
Romina Mura & PD & Sardegna & 137 &  & $\bullet$ \\
Michele Casino & FI & Basilicata & 119 &  &  \\
Gabriele Lanzi & M5S & Emilia-Romagna & 99 &  &  \\
Gianni Tonelli & Lega & Emilia-Romagna & 99 &  &  \\
Stella Sorgente & M5S & Toscana & 97 &  &  \\
Giuseppe Buondonno & AVS & Lombardia 4 & 93 &  &  \\
Guido Milanese & FI & Campania 1 & 88 &  &  \\
Donatella Albini & AVS & Lombardia 3 & 86 &  &  \\
Paolo Nicolò Romano & AVS & Piemonte 2 & 86 &  &  \\
Stefano Quaranta & AVS & Liguria & 85 &  &  \\
Luisa Serroni & AVS & Marche & 84 &  &  \\
Pierpaolo Placido Salvatore Montalto & AVS & Sicilia 2 & 83 &  & $\bullet$ \\
Rahel Seium & AVS & Abruzzo & 83 &  &  \\
Serena Pellegrino & AVS & Friuli-Venezia Giulia & 83 &  &  \\
Filippo Sestito & AVS & Calabria & 78 &  &  \\
Marilena Grassadonia & AVS & Sicilia 1 & 78 &  &  \\
Monica Ferraris & M5S & Lombardia 3 & 78 &  &  \\
Giovanni Ravalli & FdI & Piemonte 1 & 77 &  &  \\
Sara Montrasio & M5S & Lombardia 1 & 69 &  &  \\
Luigi D'Eramo & Lega & Abruzzo & 64 &  &  \\
Federica Di Sarcina & AVS & Lazio 2 & 63 &  &  \\
Roberta Rigamonti & AVS & Trentino-Alto Adige & 63 &  &  \\
Enrico Montani & Lega & Piemonte 2 & 61 &  &  \\
Donato Lettieri & AVS & Basilicata & 59 &  &  \\
Lucia Annibali & A--IV & Toscana & 51 &  &  \\
Cosimo Maria Ferri & A--IV & Liguria & 49 &  &  \\
Rosanna Natoli & FdI & Sicilia 2 & 44 &  &  \\
Urania Giulia Rosina Papatheu & FI & Sicilia 2 & 44 &  &  \\
Roberto Nicola Cassinelli & FI & Liguria & 42 &  &  \\
Maria Valentina Vezzali & FI & Marche & 41 &  &  \\
Tommaso Fagioli & A--IV & Marche & 39 &  &  \\
Maria Cristina Sandrin & FdI & Veneto 2 & 38 &  &  \\
Mattia Ierardi & FdI & Veneto 2 & 37 &  &  \\
Giacomo Leonello Leonelli & A--IV & Umbria & 35 &  &  \\
Lara Fadini & Lega & Veneto 2 & 35 &  &  \\
Mario Lolini & Lega & Toscana & 35 &  &  \\
Matteo Bagnoli & FdI & Toscana & 35 &  &  \\
Gabriele Toccafondi & A--IV & Veneto 2 & 34 &  &  \\
Michelina Lunesu & Lega & Sardegna & 34 &  &  \\
Diego Binelli & Lega & Trentino-Alto Adige & 33 &  & $\bullet$ \\
Fulvia Michela Caligiuri & FI & Calabria & 33 &  &  \\
Margherita La Rocca & FI & Sicilia 1 & 33 &  &  \\
Matteo Perego Di Cremnago & FI & Lombardia 1 & 33 &  &  \\
Mario Polese & A--IV & Basilicata & 29 &  &  \\
Annaelsa Tartaglione & FI & Molise & 28 &  &  \\
Sandra Savino & FI & Friuli-Venezia Giulia & 26 &  &  \\
Pasquale Pepe & Lega & Basilicata & 22 &  &  \\
Vincenza Bruno Bossio & PD & Calabria & 21 &  &  \\
Maria Rachele Ruiu & FdI & Lazio 1 & 20 &  &  \\
Francesca Cucchiara & AVS & Lombardia 1 & 12 &  &  \\
Francesco Michelotti & FdI & Toscana & 11 &  &  \\
Michele Marone & Lega & Molise & 9 &  &  \\
Simona Sassara & M5S & Lazio 2 & 9 &  &  \\
Elena Carnevali & PD & Lombardia 1 & 6 &  &  \\
Alessia Rotta & PD & Veneto 2 & 4 &  &  \\
Antonio Bramante & AVS & Puglia & 3 &  &  \\
Gregorio Fontana & FI & Veneto 2 & 3 &  &  \\
Claudia Porchietto & FI & Piemonte 1 & 2 &  &  \\
Luca Carabetta & M5S & Piemonte 1 & 1 &  &  \\
Mario Iacopino & M5S & Piemonte 2 & 1 &  &  \\
Mirella Cristina & FI & Piemonte 2 & 1 &  &  \\
\end{longtable}
\end{small}

\end{document}